\documentclass{article}[12pt]
\usepackage{float}
\usepackage[english]{babel}
\usepackage[margin=1in]{geometry}
\usepackage{amssymb}
\usepackage{amsthm}
\usepackage{mathtools}
\usepackage{bbm, amsmath, amsfonts}
  \usepackage{paralist}
  \usepackage{graphics} 
  \usepackage{epsfig} 
\usepackage{graphicx}  
\usepackage{epstopdf}
\usepackage[font={small,it}]{caption}
\usepackage{subcaption}
\captionsetup[subfigure]{labelfont=rm}
\graphicspath{{figures/}} 

 \usepackage[colorlinks=true]{hyperref}
\hypersetup{urlcolor=blue, citecolor=red}


\newcommand{\rmd}{\mathrm{d}}

\renewcommand{\epsilon}{\varepsilon}

\newcommand{\one}{{\mathbbm{1}}}
\newcommand{\nwc}{\newcommand}
\nwc{\calA}{{\mathcal A}}
\nwc{\ip}[1]{{\langle #1 \rangle}}
\nwc{\inv}{^{-1}}
\nwc{\D}{\partial}

\nwc{\red}[1]{\textcolor{red}{#1}}
\nwc{\blue}[1]{\textcolor{blue}{#1}}
 
\newtheorem{theorem}{Theorem}[section]
\newtheorem{corollary}[theorem]{Corollary}
\newtheorem{lemma}[theorem]{Lemma}
\newtheorem{proposition}[theorem]{Proposition}

\theoremstyle{definition}

\newtheorem{remark}[theorem]{Remark}

\numberwithin{equation}{section}

\date{\today}

\begin{document}
\author{Pierre Degond\thanks{Department of Mathematics, Imperial College London,
London SW7 2AZ, United Kingdom. pdegond@imperial.ac.uk}, Maximilian Engel\thanks{Department of Mathematics, Technical University of Munich, Munich D-85748, Germany. maximilian.engel@tum.de}, Jian-Guo Liu\thanks{Department of Physics and Department of Mathematics, Duke University, Durham, NC 27708, USA. Jian-Guo.Liu@duke.edu}, Robert L. Pego\thanks{Department of Mathematics, Carnegie Mellon University, Pittsburgh, Pennsylvania, PA 12513, USA. rpego@cmu.edu}}

\title{A Markov jump process modelling animal group size statistics}

\maketitle

\begin{abstract}
We translate a coagulation-framentation model, describing the dynamics of
animal group size distributions, into a model for the population distribution
and associate the nonlinear evolution equation with a Markov jump process
of a type introduced in classic work of H.~McKean.
In particular this formalizes a model suggested by H.-S. Niwa
[J.~Theo.~Biol.~224 (2003)] with simple coagulation and fragmentation rates.
Based on the jump process, we develop a numerical scheme that allows us to
approximate the equilibrium for the Niwa model, validated by comparison to
analytical results by Degond et al.  [J.~Nonlinear Sci.~27 (2017)], and study
the population and size distributions for more complicated rates. Furthermore,
the simulations are used to describe statistical properties of the underlying
jump process. We additionally discuss the relation of the jump process to
models expressed in stochastic differential equations and demonstrate that such
a connection is justified in the case of nearest-neighbour interactions, as
opposed to global interactions as in the Niwa model.
\end{abstract}

\bigskip

{\bf Keywords:} Population dynamics, numerics, jump process, fish schools, {self-consistent} Markov process

{\bf Mathematics Subject Classification (2010):} 60J75, 65C35, 70F45, 92D50, 45J05, 65C30

\section{Introduction} \label{intro}
The aggregation of animals into groups of different sizes involves a range of stimulating mathematical problems. On the one hand, the changes in size of the group a certain individual belongs to can be described, on the microscopic level, as a (stochastic) process. On the other hand, this process can be associated, on the macroscopic level, to the distribution of group sizes and the probability of individuals to belong to a group of a certain size (referred to below as the population distribution), and the evolution of such distributions in time. In particular, the existence and uniqueness of and convergence to an equilibrium distribution is a central object of interest. 

Various models of describing the coagulation and fragmentation of groups of animals have been suggested and analysed in the past (cf. e.g. \cite{BD, BDF, G, GL, O}). The model this work rests upon was introduced by Hiro-Sato Niwa in 2003 \cite{N} related to studies in \cite{N1, N2, N3} and has turned out to hold for data from pelagic fish and mammalian herbivores in the wild. 
Niwa simulated a very simple merge and split process for a fixed population but he did not analyze the actual process he simulated. Instead, he used kinetic Monte-Carlo simulations to fit the noise in a stochastic differential equation (SDE) model for the size of the group that an individual animal belongs to. Due to its fairly simple form, he was able to find a closed formula for the stationary density of this SDE in a form similar to an exponential law (modified by a double-exponential factor), interpreting it as the equilibrium population distribution. Since the population distribution is related to the group-size distribution by a simple algebraic relation, he was able to find the equilibrium group-size distribution in a form close to an inverse power law with an exponential cutoff, namely 
\begin{equation}
    \Phi_\star(x) \propto x\inv \exp\left( -x+ \frac12 x e^{-x}\right).
\label{Phi-Niwas}
\end{equation}
In \cite{N}, Niwa showed that this expression 
provided a good fit to a large amount of empirical data with no fitting parameters.

In \cite{DLP}, we have pursued Niwa's model based on a coagulation-fragmentation formulation for the distribution of group sizes and given a rigorous description of the equilibria for continuous and discrete cluster sizes. The lack of a detailed balanced condition presented a true mathematical challenge. However, by introducing the so-called Bernstein transformation, we have shown that there exists a unique equilibrium, under a suitable normalization condition, for both the discrete and the continuous cluster size cases. Furthermore, we provided numerical investigations of the model in \cite{DE}.

In the present paper we develop Niwa's original idea for modelling the population distribution through the group-size history of a fixed individual. We derive and study the naturally associated jump process rather than using the SDE framework, however.
The jump process is described through a self-consistent Markovian approach
as introduced in classic work of H.~McKean~\cite{McKean66},
in which the rates of jumps for a (tagged) individual's group size depend upon
the (macroscopic) group-size distribution for the whole population.  
To be a self-consistent description of the dynamics, this macroscopic
group-size distribution should coincide with the probability distribution
that evolves under the jump process for the tagged individual.

This feedback makes it difficult to handle such a jump process analytically
in the cases we wish to consider. (Some of the earliest analytic work on 
a general class of jump processes that follow McKean's framework was 
carried out by Ueno \cite{Ueno} and Tanaka \cite{Tanaka1970a, Tanaka1970b}. 
A vast literature now exists on related McKean-Vlasov diffusion processes.)
Therefore we develop an algorithm that approximates the jump process
and which we can use to study the dynamics of the process 
and the equilibrium group-size distribution for variations of the Niwa model.
The idea is to estimate the macroscopic group-size distribution 
for a large population of total size $N$ 
by the empirical distribution of a sample of $\tilde N$ tagged individuals
with $\tilde N\ll N$.  This effectively results in a Markov jump process
for the group sizes of a fixed number $\tilde N$ of tagged individuals, with transition
rates driven by the population distribution of these individuals.
A closely related time-continuous Markovian interacting particle system was 
studied by Eibeck and Wagner in \cite{EW2000} and \cite{EW2001}
where they proved convergence of the empirical distribution to 
a solution of coagulation models which was extended to coagulation-fragmentation models, among others \cite{EW2003}. In this paper we will pay particular attention, however, not only
to the estimated macroscopic group-size distribution, but to 
statistical and dynamical properties of the jump process 
for a tagged individual. 

We refer to \cite{Aldous99} for an early review of open problems relating
Smoluchowski's coagulation equations to stochastic models, including
the standard Marcus-Lushnikov process \cite{Marcus68, Lushnikov78} 
for the size distribution of all groups in a fixed finite population.  
Our approach and the one in \cite{EW2001} deal with somewhat simpler 
jump-process models for the population distribution for a 
fixed number of groups (which can be regarded as those containing 
tagged individuals).

In \cite{DE}, we approached the coagulation-fragmentation form of the Niwa model with three different numerical methods.
One of them is a recursive algorithm derived from the discrete model in \cite{MJS} which allows to compute the equilibrium precisely but only for fixed coagulation and fragmentation rates being constant over time and group sizes. It further does not give insight into the time-evolution.  
The second method is a Newton-like method which also only allows to approximate the equilibrium, but with the advantage of being adaptable to size-dependent merge and split rates. 
The third method is a time-dependent Euler method which is flexible towards arbitrary modifications of the model but numerically not efficient. 

We will show in the current paper that our numerical scheme based on a Markov chain can be used to approximate the equilibrium for all different kinds of coagulation and fragmentation rates, allows for insights into the time evolution of the population distribution and also enables us to study  properties of the process such as the statistics of occupation times in cluster sizes and the decay of correlations of trajectories. The method is accurate, efficient (in particular in equilibrium where it is sufficient to simulate only one trajectory), versatile, and gives insights into the dynamics on the individual level.

Furthermore, we will also discuss some shortcomings of Niwa's SDE model for the 
temporal behavior of the size of the group containing a given individual.
As merging and splitting occurs, 
involving interactions of groups of all different sizes,
an individual's group size can be expected to experience frequent large jumps.
Yet the SDE model is an Ornstein-Uhlenbeck-type (OU-type) process that
has continuous paths in time. 
One way this could be a reasonable approximation is if
the jump process has mostly small jumps, for then there
is a natural SDE approximation found
through a second-order Taylor expansion.
We find, though, that for the jump process in question, 
the equilibrium of the second-order SDE approximation is not consistent with
the rigorously derived equilibrium in \cite{DLP}. 

Another modeling issue is that the OU-type process will naturally
produce unphysical (negative) group sizes. Some kind of 
reflection or symmetrization is needed to maintain positive sizes, 
but the natural choice leaves a free parameter in the model
and is not motivated well by merging/splitting mechanisms.
Niwa's model also involves an exponentially growing variance which 
makes accurate numerical simulation difficult.

Lastly, we will point out that there is a kind of group-size dynamics,
distinctly different from what Niwa simulated, 
that admits an SDE model whose equilibria are quite similar to 
the equilibria found in \cite{N} and \cite{DLP}, but even simpler.
The equilibria take exactly the form of a simple power law with exponential 
cutoff---a logarthimic or gamma distribution.
This alternative SDE model for such distributions
goes back to \cite{Maybook} and \cite{Polansky}, 
and corresponds to a process with continuous sample paths
guaranteed to stay in $(0, \infty)$ without hitting 0. 
We show formally the convergence of a nearest-neighbour model 
for jumps in group sizes to the equilibrium of this SDE.

The remainder of this paper is structured as follows. In Section~\ref{sec:jumpprocess}, we derive the evolution equation of the population distribution for a coagulation-fragmentation model of size distributions, in general, and the Niwa model, in particular. 
Furthermore, this nonlinear evolution equation is shown to coincide with the master equation for a 
Markov jump process whose jump rates are determined self-consistently by the population distribution itself. 
Section~\ref{algorithm} introduces the numerical scheme which is then used in Section~\ref{nuumval} to simulate the process for different choices of coagulation and fragmentation rates. We validate the method by observing fast and accurate approach of the (known) equilibrium for the Niwa model and further use the algorithm to generate the size distributions for random and polynomial rates. In Section~\ref{sec:statistics}, we use the numerical method to estimate the decay of correlations of the Niwa jump process and, additionally, describe the statistics of ocuupation times for different kinds of rates. Finally, Section~\ref{sec:SDEs} is dedicated to the role of stochastic differential equations in the context of the aggregation models.

\setcounter{equation}{0}
\section{Description of the (self-consistent) Markov jump process for the population distribution} \label{sec:jumpprocess}
\subsection{Evolution equation for the population distribution}
\label{sec:description}
The continuous version of a coagulation-fragmentation equation, called also Smoluchowski equation, describes the evolution of the number density $f(x,t)$ of continuous sizes $x \geq 0$ at time $t$. In weak form it reads, for all test functions $\varphi \in C((0, \infty))$: 
\begin{equation} \label{weakgeneral}
\begin{split}
\frac{\rmd}{\rmd t}\int_{\mathbb{R}_{+}} \varphi(s)f(s,t) \rmd s
= \frac{1}{2} \int_{(\mathbb{R}_{+})^2} (\varphi(s+\hat{s}) - \varphi(s) - \varphi(\hat{s}))a(s,\hat{s})f(s,t)f(\hat{s},t) \rmd s \, \rmd \hat{s} \\
- \frac{1}{2} \int_{(\mathbb{R}_{+})^2}(\varphi(s+\hat{s}) - \varphi(s) - \varphi(\hat{s}))b(s,\hat{s}) f(s+\hat{s},t) \rmd s \, \rmd \hat{s}.
\end{split}
\end{equation}
The coagulation rate $a(s,\hat{s})$ and fragmentation rate $b(s,\hat{s})$ are both nonnegative and symmetric. The coagulation and fragmentation reactions can be written schematically
\begin{align*}
(s) + (\hat{s}) \ &\xrightarrow{a(s,\hat{s})} \ (s+\hat{s}) \quad \text{(binary coagulation)},\\
(s) + (\hat{s}) \ &\xleftarrow{b(s,\hat{s})} \ (s+\hat{s}) \quad \text{(binary fragmentation)}.
\end{align*}
By a change of variables, (\ref{weakgeneral}) can be transformed into 
\begin{equation} \label{weakgeneralalt}
\begin{split}
\frac{\rmd}{\rmd t}\int_{\mathbb{R}_{+}} \varphi(s)f(s,t) \rmd s
= \frac{1}{2} \int_{(\mathbb{R}_{+})^2} (\varphi(s+\hat{s}) - \varphi(s) - \varphi(s))a(s,\hat{s})f(s,t)f(\hat{s},t) \rmd s \, \rmd \hat{s} \\
- \frac{1}{2} \int_{(\mathbb{R}_{+})}\left( \int_{0}^s(\varphi(s) - \varphi(\hat{s}) - \varphi(s-\hat{s}))b(\hat{s},s-\hat{s}) \, \rmd \hat{s} \right) f(s,t) \, \rmd s.
\end{split}
\end{equation}
Note that by taking $\varphi(s) = s$, one obtains the conservation of mass
\begin{equation} \label{mass}
\frac{\rmd}{\rmd t} \int_{\mathbb{R}_{+}} s f(s,t) \, \rmd s = 0.
\end{equation}
Starting from the group size distrbution $f (s,t)$ satisfying equation~\eqref{weakgeneralalt},
we introduce the population distribution 
\begin{equation} \label{PopDistr}
\rho(s,t) = \frac{s \, f(s,t)}{\int_{{\mathbb R}_+} s \, f(s,t) \, \rmd s}\,.
\end{equation}
From \eqref{mass} we observe that
$$ \int_{{\mathbb R}_+} s \, f(s,t) \, \rmd s =: N $$ 
is conserved and corresponds to the total number of individuals $N$. 

In Niwa's model \cite{N}, the coagulation and fragmentation rates are constant. The setting of the model assumes different zones of space on which $N$ individuals move, where $N$ is conserved through time. At each time step every group, whose size is a natural number $i \in \mathbb{N}$, moves towards a randomly selected site with equal probability. When $i$- and $j$-sized groups meet at the same site, they aggregate to a group of size $i+j$. So the coagulation rate is independent from the group sizes and can be written as $a_{i,j} = 2 \tilde q$ for any $i,j > 0$ where $\tilde q > 0$ is the fixed coagulation parameter. 
The fragmentation rate $b_{i,j}$  expresses the fact that at each time step each group with size $k \geq 2$ splits with probability $\tilde p$ independent of $k$, and that if it does split, it breaks into one of the pairs with sizes $ (1,k-1), (2,k-2), \dots, (k-1,1)$ with equal probability. As the actually distinct pairs are counted twice in such an enumeration, one gets for all $1 \leq i,j < k$ with $ i + j =k $: 
$b_{i,j} = \frac{\tilde p}{(i+j-1)/2} = \frac{2 \tilde p}{i+j-1}$.

The formulation for continuous cluster sizes gives
\begin{equation} \label{DegondrateC}
a(s,\hat{s}) = q, \quad b(s,\hat{s}) = \frac{p}{s+\hat{s}}\,,
\end{equation}
where $q= 2 \tilde{q}$  and $p = 2 \tilde{p}$ for $\tilde{q}$ and $\tilde{p}$ being the constants in Niwa's model. In this case the group size distribution $f (s,t)$ satisfies the following equation \cite{DLP}  
\begin{eqnarray}
&&\hspace{-1cm}
\frac{\rmd}{\rmd t} \int_{{\mathbb R}_+} \varphi(s) \, f(s,t) \, \rmd s =
\frac{q}{2} \int_{({\mathbb R}_+)^2} \big( \varphi (s+\hat s) - \varphi(s) - \varphi(\hat s) \big) \, f(s,t) \, f(\hat s,t)  \, \rmd s \, \rmd \hat s  \nonumber \\
&&\hspace{2cm}
- \frac{p}{2} \int_{({\mathbb R}_+)^2} \big( \varphi (s) - \varphi(\hat s) - \varphi(s - \hat s) \big) \,  \frac{f(s,t)}{s} \, \chi_{[0,s)} (\hat s) \, \rmd s \, \rmd \hat s \,.
\label{eq:CF1_Niwa}
\end{eqnarray}
In \cite[Theorem 5.1]{DLP}, the existence of a unique scaling profile $f_{*}$ for the equilibrium of~\eqref{eq:CF1_Niwa} is proven, where $f_*(x) = \gamma_*(x) e^{-\frac{4}{27} x}$ for all $x \in (0, \infty)$ and $\gamma_*$ is a \emph{completely monotone} function (infinitely differentiable with derivatives that alternate in sign) with asymptotic power laws.

We derive the general strong form for the population density corresponding with model~\eqref{weakgeneral}, presented in the following Proposition, and thereby in particular for the Niwa model, as stated in the subsequent Corollary. 
\begin{proposition} \label{prop:Evequ}
In strong form the evolution equation of the population density $\rho(\cdot,t)$ at any time $t \in \mathbb{R}_+$, corresponding with~\eqref{weakgeneral}, is given by
\begin{align}
\partial_t \rho(s,t) &=
 N\, \int_0^\infty  \Big( a(\hat{s}, s - \hat{s})\frac{\rho(\hat s,t)  \, \rho(s - \hat s ,t)}{s - \hat s} \chi_{[0,s)}(\hat s)    - a(s,\hat{s})\frac{\rho(s,t) \, \rho(\hat s,t)}{\hat s} \Big)  \, \rmd \hat s \nonumber \\
&+   s \int_s^\infty \frac{b(s, \hat s - s)}{\hat s} \rho(\hat s, t) \, \rmd \hat s -  \rho(s,t) \int_{0}^{\infty} b(\hat s, s - \hat s)\chi_{[0,s/2]} (\hat s)  \,  \rmd \hat s . 
\label{eq:jumpproc_stronggen}
\end{align}
\end{proposition}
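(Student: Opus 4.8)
The plan is to pass from the weak formulation \eqref{weakgeneral} for the size distribution $f$ to a weak formulation for the population distribution $\rho$, and then read off the strong form \eqref{eq:jumpproc_stronggen} by matching the coefficients of an arbitrary test function. The bridge is the algebraic relation $f(s,t)=N\rho(s,t)/s$ coming from \eqref{PopDistr}, together with the conservation law \eqref{mass}, which guarantees that $N$ is constant in time. Given $\psi\in C((0,\infty))$ (compactly supported, say, to justify the manipulations below), I would apply \eqref{weakgeneral} with the test function $\varphi(s)=s\,\psi(s)$; since $N$ is constant, the left-hand side becomes $N\frac{\rmd}{\rmd t}\int_{\mathbb{R}_+}\psi(s)\rho(s,t)\,\rmd s$, so the whole task reduces to re-expressing the right-hand side in terms of $\rho$ and $\psi$ and then dividing through by $N$.

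For the coagulation integral, substituting $\varphi(s)=s\psi(s)$ and $f=N\rho/s$ turns the integrand into $\tfrac{N^2}{2}\big[(s+\hat s)\psi(s+\hat s)-s\psi(s)-\hat s\psi(\hat s)\big]\,a(s,\hat s)\,\frac{\rho(s,t)\rho(\hat s,t)}{s\hat s}$, and I would treat the three summands in the bracket separately. In the gain summand carrying $\psi(s+\hat s)$ I would change variables via $u=s+\hat s$ (keeping $s$) and apply Fubini to expose the coefficient of $\psi(u)$; the two loss summands carrying $\psi(s)$ and $\psi(\hat s)$ coincide after using the symmetry $a(s,\hat s)=a(\hat s,s)$ and relabelling, producing the loss term $-\,a(s,\hat s)\rho(s,t)\rho(\hat s,t)/\hat s$. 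The fragmentation integral is handled the same way, now with $f(s+\hat s,t)=N\rho(s+\hat s,t)/(s+\hat s)$: the $\psi(s+\hat s)$ piece becomes, after $u=s+\hat s$ and Fubini, a pure loss term proportional to $\rho(s,t)\int_0^s b(\hat s,s-\hat s)\,\rmd\hat s$, while the two remaining pieces coincide by symmetry of $b$ and, after the substitution $\hat s\mapsto s+\hat s$ in the inner integral, yield the gain term $s\int_s^\infty \frac{b(s,\hat s-s)}{\hat s}\rho(\hat s,t)\,\rmd\hat s$.

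The step I expect to require the most care is reconciling the constant factors and the integration domains, where two symmetrizations are essential. First, in the coagulation gain term the factor $\frac{s}{\hat s(s-\hat s)}$ (coming from the $\frac1s$ in each copy of $f$ and the surviving $s$ from $\varphi$) must be split by partial fractions as $\frac{1}{\hat s}+\frac{1}{s-\hat s}$; since the rest of the integrand over $(0,s)$ is invariant under $\hat s\mapsto s-\hat s$, the two pieces contribute equally, and this exactly cancels the prefactor $\tfrac12$ to give the clean $\frac{1}{s-\hat s}$ appearing in \eqref{eq:jumpproc_stronggen}. Second, the fragmentation loss integral $\int_0^s b(\hat s,s-\hat s)\,\rmd\hat s$ has a symmetric integrand, hence equals $2\int_0^{s/2}b(\hat s,s-\hat s)\,\rmd\hat s$, which absorbs the prefactor $\tfrac12$ and produces the cutoff $\chi_{[0,s/2]}$. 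Once these identities are in place, dividing by $N$ and matching the coefficient of the arbitrary function $\psi(s)$ yields \eqref{eq:jumpproc_stronggen}; a brief remark on the admissible class of $\psi$ and the integrability needed to justify Fubini and the change of variables completes the argument.
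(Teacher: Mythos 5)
Your proposal is correct and follows essentially the same route as the paper's proof: both test the weak form against $s\,\psi(s)/N$, exploit the symmetry of $a$ and $b$ together with substitutions of the type $u=s+\hat s$ (equivalently $\hat s\mapsto\hat s-s$) to extract the gain terms, and symmetrize the fragmentation loss integral to produce the cutoff $\chi_{[0,s/2]}$. Your partial-fraction split $\frac{s}{\hat s(s-\hat s)}=\frac{1}{\hat s}+\frac{1}{s-\hat s}$ is the same algebra as the paper's identity $(s+\hat s)\varphi(s+\hat s)-s\varphi(s)-\hat s\varphi(\hat s)=s\,(\varphi(s+\hat s)-\varphi(s))+\hat s\,(\varphi(s+\hat s)-\varphi(\hat s))$, merely applied after rather than before the change of variables.
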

\begin{proof}
By testing \eqref{weakgeneralalt} against $s \varphi(s)/N$ we obtain:
\begin{align*}
\frac{\rmd}{\rmd t} \int_{{\mathbb R}_+} \varphi(s) \, \rho(s,t) \, \rmd s &=
\int_{({\mathbb R}_+)^2}\frac{a(s,\hat{s})}{2N}  \big( (s+\hat s)\varphi (s+\hat s) - s\varphi(s) - \hat s \varphi(\hat s) \big) \, f(s,t) \, f(\hat s,t)  \, \rmd s \, \rmd \hat s  \nonumber \\
&-  \int_{({\mathbb R}_+)^2} \frac{b(\hat{s}, s-\hat{s})}{2N} \big( s \varphi (s) - \hat s \varphi(\hat s) - (s - \hat s) \varphi(s - \hat s) \big) \,  f(s,t) \, \chi_{[0,s)} (\hat s) \, \rmd s \, \rmd \hat s \\
&=  \int_{({\mathbb R}_+)^2} \frac{N a(s,\hat{s})}{2} \big( (s+\hat s)\varphi (s+\hat s) - s\varphi(s) - \hat s \varphi(\hat s) \big) \, \frac{\rho(s,t) \, \rho(\hat s,t)}{s \hat s}  \, \rmd s \, \rmd \hat s  \nonumber \\
&-  \int_{({\mathbb R}_+)^2} \frac{b(\hat{s}, s-\hat{s})}{2} \big( s \varphi (s) - \hat s \varphi(\hat s) - (s - \hat s) \varphi(s - \hat s) \big) \,  \frac{\rho(s,t)}{s} \, \chi_{[0,s)} (\hat s) \, \rmd s \, \rmd \hat s.
\end{align*}
We remark that 
$$ (s+\hat s)\varphi (s+\hat s) - s\varphi(s) - \hat s \varphi(\hat s)  = s ( \varphi (s+\hat s) - \varphi(s)) + \hat s ( \varphi (s+\hat s) - \varphi(\hat s))\,, $$
and by symmetry under exchanges of $s$ and $\hat s$, the first integral becomes 
\begin{equation} \label{massintegral}
\int_{({\mathbb R}_+)^2} ( \varphi (s+\hat s) - \varphi(s))  \, \rho(s,t) \, \Big( a(s, \hat{s}) N \frac{\rho(\hat s,t)}{\hat s} \Big)   \, \rmd s \, \rmd \hat s.
\end{equation} 
Noting that the change of variables $\hat s \to s - \hat s$ leaves the second integral invariant, we can restrict the interval of integration in $\hat s$ to the interval $[0,s/2]$ upon multiplying the result by $2$. So, the second integral equals:
$$ \int_{({\mathbb R}_+)^2} \big(\frac{\hat s}{s} \varphi(\hat s) + \frac{s - \hat s}{s} \varphi(s - \hat s) - \varphi (s)\big) \,   b(\hat{s}, s - \hat{s}) \, \chi_{[0,s/2]} (\hat s) \rho(s,t) \, \rmd s \, \rmd \hat s. 
$$
The resulting equation is thus
\begin{align}
\frac{\rmd}{\rmd t} \int_{\mathbb{R}_+} \varphi(s) \rho(s,t) \rmd s &=  \int_{({\mathbb R}_+)^2} ( \varphi (s+\hat s) - \varphi(s))  \, \rho(s,t) \, \Big( a(s, \hat{s}) N \frac{\rho(\hat s,t)}{\hat s} \Big)   \, \rmd s \, \rmd \hat s \nonumber \\
&+  \int_{({\mathbb R}_+)^2} \big(\frac{\hat s}{s} \varphi(\hat s) + \frac{s - \hat s}{s} \varphi(s - \hat s) - \varphi (s)\big) \, \left( b(\hat{s}, s - \hat{s}) \, \chi_{[0,s/2]} (\hat s) \right) \rho(s,t) \, \rmd s \, \rmd \hat s.
\label{eq:jumpproc_weakgen}
\end{align}
To derive the strong form for the first integral in~\eqref{eq:jumpproc_weakgen} we only need to conduct a change of variables for the term involving $\varphi(s + \hat s)$, namely $ \hat s \to \hat s -s$:
$$  \int_{({\mathbb R}_+)^2}  \varphi (s+\hat s)a(s, \hat{s}) \rho(s,t) N \frac{\rho(\hat s,t)}{\hat s} \, \rmd s \, \rmd \hat s = \int_{0}^{\infty} \int_{0}^{\hat s} a(s, \hat{s}-s) N \frac{\rho(s,t) \rho(\hat s -s,t)}{\hat s -s}  \, \rmd s \,\varphi(\hat s)\ \rmd \hat s \,.
$$
For the second integral, again by symmetry between $\hat s$ and $s - \hat s$, it is enough to observe that
$$ \int_{({\mathbb R}_+)^2} \frac{\hat s}{s} \varphi(\hat s) b(\hat{s}, s - \hat{s}) \, \chi_{[0,s/2]} (\hat s) \rho(s,t) \, \rmd s \, \rmd \hat s = \frac{1}{2} \int_{0}^{\infty} \hat s \int_{\hat s}^{\infty} \frac{b(\hat{s}, s - \hat{s})}{s}   \,  \rho(s,t) \, \rmd s \, \varphi(\hat s) \rmd \hat s \,. $$
This finishes the proof.
\end{proof}
Note from the weak form~\eqref{eq:jumpproc_weakgen} in the proof that the merge process is done with rate $ a(s, \hat{s}) N \frac{\rho(\hat s,t)}{\hat s}$ while the split process is done with rate $ b(\hat{s}, s - \hat{s})$. Due to its dependence on $\rho$, the merge process is characterized by a nonlinear term which turns out to be a challenging feature of the following analysis. First note that we obtain the following equation for the Niwa model:
\begin{corollary}
In strong form the evolution equation of the population density $\rho(\cdot,t)$ at any time $t \in \mathbb{R}_+$, corresponding with~\eqref{eq:CF1_Niwa}, is given by
\begin{align}
\partial_t \rho(s,t) &=
q N\, \int_0^\infty  \Big( \frac{\rho(\hat s,t)  \, \rho(s - \hat s ,t)}{s - \hat s} \chi_{[0,s)}(\hat s)    - \frac{\rho(s,t) \, \rho(\hat s,t)}{\hat s} \Big)  \, d\hat s \nonumber \\
&+ p \Big( s \int_s^\infty \frac{\rho(\hat s, t)}{\hat s^2} d \hat s - \frac{1}{2} \rho(s,t) \Big) . 
\label{eq:jumpproc_strong}
\end{align}
\end{corollary}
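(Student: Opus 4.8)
The plan is to obtain~\eqref{eq:jumpproc_strong} simply by inserting the Niwa rates~\eqref{DegondrateC}, namely $a(s,\hat s)=q$ and $b(s,\hat s)=p/(s+\hat s)$, into the general strong form~\eqref{eq:jumpproc_stronggen} from Proposition~\ref{prop:Evequ}, and then evaluating the two fragmentation integrals explicitly. Because all the structural rearrangements (the symmetrizations and the changes of variables $\hat s\to\hat s-s$ and $\hat s\to s-\hat s$) have already been carried out in the proof of the Proposition, the remaining work is a direct substitution together with two elementary integrations.

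First I would treat the coagulation line. Since the coagulation rate is the constant $q$, both instances $a(\hat s,s-\hat s)$ and $a(s,\hat s)$ in the first line of~\eqref{eq:jumpproc_stronggen} equal $q$, which may be pulled outside the integral. This reproduces the term $qN\int_0^\infty\bigl(\tfrac{\rho(\hat s,t)\rho(s-\hat s,t)}{s-\hat s}\chi_{[0,s)}(\hat s)-\tfrac{\rho(s,t)\rho(\hat s,t)}{\hat s}\bigr)\,\rmd\hat s$ verbatim, so no further manipulation is needed here.

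Next I would handle the two fragmentation terms, where the only care required is to track the arguments fed into $b$. For the gain term, $b(s,\hat s-s)=p/\bigl(s+(\hat s-s)\bigr)=p/\hat s$, so $s\int_s^\infty\frac{b(s,\hat s-s)}{\hat s}\rho(\hat s,t)\,\rmd\hat s=ps\int_s^\infty\frac{\rho(\hat s,t)}{\hat s^2}\,\rmd\hat s$. For the loss term, $b(\hat s,s-\hat s)=p/\bigl(\hat s+(s-\hat s)\bigr)=p/s$ is independent of $\hat s$, so the integral over the support $[0,s/2]$ of the indicator gives $\rho(s,t)\int_0^{s/2}\frac{p}{s}\,\rmd\hat s=\frac{p}{s}\cdot\frac{s}{2}\,\rho(s,t)=\frac{p}{2}\rho(s,t)$. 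Collecting the gain and loss contributions produces the bracket $p\bigl(s\int_s^\infty\frac{\rho(\hat s,t)}{\hat s^2}\,\rmd\hat s-\tfrac12\rho(s,t)\bigr)$, which is exactly the second line of~\eqref{eq:jumpproc_strong}.

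I expect no genuine obstacle: the derivation is a routine specialization. The one place to stay alert is the bookkeeping of the shifted arguments of $b$ (that $b(s,\hat s-s)$ and $b(\hat s,s-\hat s)$ both collapse neatly through the rate $p/(\,\cdot+\cdot\,)$), together with the cancellation $\frac{p}{s}\cdot\frac{s}{2}=\frac{p}{2}$ that removes the apparent $s$-dependence of the fragmentation loss rate and leaves the clean constant $\tfrac12$.
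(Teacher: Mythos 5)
Your proposal is correct, and it is essentially the paper's argument: the paper likewise treats the corollary as a routine specialization of Proposition~\ref{prop:Evequ} to the Niwa rates~\eqref{DegondrateC}, merely restating the weak form~\eqref{eq:jumpproc_weakgen} with those rates inserted before noting that ``the claim follows by an easy calculation,'' whereas you substitute directly into the strong form~\eqref{eq:jumpproc_stronggen}, which is if anything slightly more direct. All your evaluations check out, including the key simplifications $b(s,\hat s-s)=p/\hat s$, $b(\hat s,s-\hat s)=p/s$, and the cancellation $\frac{p}{s}\cdot\frac{s}{2}=\frac{p}{2}$ for the loss term.
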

\begin{proof}
From \eqref{eq:jumpproc_weakgen} we infer immediately that
\begin{align}
\frac{\rmd}{\rmd t} \int_{\mathbb{R}_+} \varphi(s) \rho(s,t) \rmd s &= \int_{(\mathbb{R}_+)^2} \left( \varphi(s+\hat{s}) - \varphi(s)\right) \rho(s,t) \left( qN \frac{\rho(\hat{s},t)}{\hat{s}} \right)  \rmd s \, \rmd \hat{s} \nonumber \\
&+  \int_{(\mathbb{R}_+)^2} \left( \frac{\hat{s}}{s} \varphi(\hat{s})+ \frac{s-\hat{s}}{s} \varphi(s-\hat{s})- \varphi(s)\right)p \frac{\rho(s,t)}{s} \chi_{[0,s/2]}(\hat{s})\, \rmd s \, \rmd \hat{s}.
\label{eq:jumpproc_weak}
\end{align}
Again, the claim follows by an easy calculation.
\end{proof}
In this case we notice that the merge process is done with rate $q N \rho(\hat s,t)/\hat s $ while the split process is done with rate $p/s$ for every $\hat s \in[0,\frac12 s]$.

\subsection{Derivation of stochastic process} \label{derivatsp}
The key point of our approach is to regard the 
evolution equations~\eqref{eq:jumpproc_stronggen} and~\eqref{eq:jumpproc_strong} 
as master equations for a stochastic process that can be simulated,
if the population density $\rho(\cdot,t)$ is already known.
This stochastic process is essentially of the type whose study
was initiated in McKean's seminal work~\cite{McKean66}.

\subsubsection{Reformulation of deterministic dynamics}
For that purpose, we rewrite the equations as follows.
\begin{lemma}
The evolution law in strong form~\eqref{eq:jumpproc_stronggen} can also be written as
\begin{align} \label{strongform}
\partial_t \rho(s,t) &= \int_{\mathbb{R}_+} \left( K_{\rho(\cdot,t)}^c(\hat{s} \to s)\rho(\hat{s},t) - K_{\rho(\cdot,t)}^c(s \to \hat{s})\rho(s,t) \right) \rmd \hat{s} \nonumber \\
& + \int_{\mathbb{R}_+} \left( K^f(\hat{s} \to s)\rho(\hat{s},t) - K^f(s \to \hat{s})\rho(s,t) \right) \rmd \hat{s},
\end{align}
where 
\begin{equation} \label{rates}
K_{\rho(\cdot,t)}^c(\hat{s} \to s) =  N a(\hat s, s - \hat s)\frac{\rho(s-\hat{s},t)}{s-\hat{s}} \chi_{[0,s)}(\hat{s}), \quad K^f(\hat{s} \to s) =  \frac{s}{\hat{s}} b(s, \hat s -s) \chi_{[s,\infty)}(\hat{s})
\end{equation}
are the coagulation and fragmentation factors. In particular, in the Niwa model these factors are
\begin{equation} \label{Niwarates}
K_{\rho(\cdot,t)}^c(\hat{s} \to s) = q N \frac{\rho(s-\hat{s},t)}{s-\hat{s}} \chi_{[0,s)}(\hat{s}), \quad K^f(\hat{s} \to s) = p \frac{s}{\hat{s}^2} \chi_{[s,\infty)}(\hat{s})\,.
\end{equation}
\end{lemma}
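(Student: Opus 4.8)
The plan is to verify the claimed identity term by term, recognizing that \eqref{strongform} is nothing but the gain--loss (master-equation) rewriting of the strong form \eqref{eq:jumpproc_stronggen}, with the factors $K^c$ and $K^f$ chosen to package the coagulation and fragmentation transition kernels. Both equations decompose into four pieces---coagulation gain, coagulation loss, fragmentation gain, fragmentation loss---so I would match them in pairs. First, the two gain terms are immediate: inserting the definition of $K_{\rho(\cdot,t)}^c(\hat s\to s)$ from \eqref{rates} into $\int_{\mathbb{R}_+}K_{\rho(\cdot,t)}^c(\hat s\to s)\rho(\hat s,t)\,\rmd\hat s$ and using the indicator $\chi_{[0,s)}(\hat s)$ to cut the domain to $[0,s)$ reproduces verbatim the first gain term of \eqref{eq:jumpproc_stronggen}, and likewise $\int_{\mathbb{R}_+}K^f(\hat s\to s)\rho(\hat s,t)\,\rmd\hat s$ yields $s\int_s^\infty \frac{b(s,\hat s-s)}{\hat s}\rho(\hat s,t)\,\rmd\hat s$.

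Next I would treat the loss terms, which require unwinding the $s\leftrightarrow\hat s$ relabeling implicit in the notation $K(s\to\hat s)$. For coagulation, $K_{\rho(\cdot,t)}^c(s\to\hat s)=N\,a(s,\hat s-s)\frac{\rho(\hat s-s,t)}{\hat s-s}\chi_{[0,\hat s)}(s)$; the indicator $\chi_{[0,\hat s)}(s)$ restricts to $\hat s>s$, and after integrating against $\rho(s,t)$ and performing the change of variables $\hat s\mapsto\hat s-s$ this becomes $N\rho(s,t)\int_0^\infty a(s,\hat s)\frac{\rho(\hat s,t)}{\hat s}\,\rmd\hat s$, exactly the coagulation loss in \eqref{eq:jumpproc_stronggen}.

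The fragmentation loss is the one genuinely nontrivial step, and I expect it to be the main obstacle. From \eqref{rates} one reads off $K^f(s\to\hat s)=\frac{\hat s}{s}\,b(\hat s,s-\hat s)\,\chi_{[\hat s,\infty)}(s)$, giving the loss rate $\rho(s,t)\int_0^s\frac{\hat s}{s}\,b(\hat s,s-\hat s)\,\rmd\hat s$, whereas \eqref{eq:jumpproc_stronggen} instead displays $\rho(s,t)\int_0^{s/2}b(\hat s,s-\hat s)\,\rmd\hat s$. To reconcile the two I would invoke the symmetry $b(\hat s,s-\hat s)=b(s-\hat s,\hat s)$: the substitution $\hat s\mapsto s-\hat s$ shows $\int_0^s\frac{\hat s}{s}\,b\,\rmd\hat s=\frac12\int_0^s b\,\rmd\hat s$, while the same substitution gives $\int_0^{s/2}b\,\rmd\hat s=\frac12\int_0^s b\,\rmd\hat s$; hence both expressions equal $\frac12\int_0^s b(\hat s,s-\hat s)\,\rmd\hat s$ and therefore agree.

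Finally, the Niwa factors \eqref{Niwarates} follow by substituting $a(s,\hat s)=q$ and $b(s,\hat s)=p/(s+\hat s)$ into \eqref{rates}: the coagulation factor becomes $qN\,\frac{\rho(s-\hat s,t)}{s-\hat s}\,\chi_{[0,s)}(\hat s)$, and since $b(s,\hat s-s)=p/\hat s$ the fragmentation factor collapses to $p\,\frac{s}{\hat{s}^2}\,\chi_{[s,\infty)}(\hat s)$. Apart from the fragmentation-loss identity, everything is routine bookkeeping of indicator supports together with a single linear change of variables, so I would present the symmetry argument in full and leave the remaining matchings as a short verification.
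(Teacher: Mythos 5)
Your proof is correct and takes essentially the same approach as the paper's: the coagulation and gain terms are matched directly to \eqref{eq:jumpproc_stronggen}, and the fragmentation loss is reconciled via the symmetry between $\hat s$ and $s-\hat s$, yielding $\frac{1}{s}\int_0^s \hat s\, b(\hat s, s-\hat s)\,\rmd \hat s = \frac{1}{2}\int_0^s b(\hat s, s-\hat s)\,\rmd \hat s = \int_0^{s/2} b(\hat s, s-\hat s)\,\rmd \hat s$, exactly as in the paper. The only difference is presentational: the paper declares the coagulation part immediate, while you spell out the change of variables $\hat s \mapsto \hat s - s$ explicitly.
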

\begin{proof}
The coagulation part follows immediately from \eqref{eq:jumpproc_stronggen}. For the fragmentation part, we observe by using the symmetry between $\hat s$ and $s - \hat s$ that
$$ \frac{1}{s} \int_{0}^s \hat s b(\hat s, s - \hat s) \, \rmd \hat s = \frac{1}{2} \int_{0}^s \left( \frac{\hat s}{s} + \frac{s - \hat s}{s} \right) b(\hat s, s - \hat s) \rmd \hat s = \frac{1}{2} \int_{0}^s  b(\hat s, s - \hat s) \rmd \hat s = \int_{0}^{s/2}  b(\hat s, s - \hat s) \rmd \hat s \,.$$
The factors for the Niwa model follow immediately from~\eqref{DegondrateC}.
\end{proof}
We can summarize these terms, using that $N\rho(s,t)=sf(s,t)$ from \eqref{PopDistr}, into
\begin{align*} 
K_{\rho(\cdot,t)}(\hat{s} \to s)  &= K_{\rho(\cdot,t)}^c(\hat{s} \to s)  + K^f(\hat{s} \to s), \qquad
\lambda_{\rho(\cdot,t)}(s)=
\lambda_{\rho(\cdot,t)}^c(s)+ 
\lambda^f(s) \,, 
\\ 
\lambda_{\rho(\cdot,t)}^c(s) &= 
\int_{\mathbb{R}_+} K_{\rho(\cdot,t)}^c(s \to \hat{s}) \, \rmd \hat{s}
 = \int_0^\infty a(s,r)f(r,t)\,\rmd r \,, 
\nonumber\\ 
\quad \lambda^f(s) &= \int_{\mathbb{R}_+} K^f(s \to \hat{s}) \, \rmd \hat{s}
 = \int_0^s \frac{r}s b(r,s-r)\,\rmd r \,.
\nonumber
\end{align*}
Here, $\lambda_{\rho(\cdot,t)}(s)$ is the rate of change from $s$ to anything else, 
and $\lambda^f(s)$, $\lambda_{\rho(\cdot,t)}^c(s)$ have an analogous interpretation specified to fragmentation and coagulation respectively.
Eq.~\eqref{strongform} can then be written as
\begin{equation} \label{forwardequation}
\partial_t \rho(s,t) =   \int_{\mathbb{R}_+} \lambda_{\rho(\cdot,t)}(\hat s) 
\,\hat\mu_t(\hat s \to s)\,
\rho(\hat{s},t) \, \rmd \hat{s} - \lambda_{\rho(\cdot,t)}(s)\, \rho(s,t)\,,
\end{equation}
where
\begin{equation} \label{transition_prob1}
 \hat\mu_t(\hat s\to s) := \frac{K_{\rho(\cdot,t)} (\hat s \to s ) }{\lambda_{\rho(\cdot,t)}(\hat s)}
\end{equation}
is the corresponding probability of change from $s $ to some fixed $\hat{s}$. 
Formula~\eqref{forwardequation} recalls the classical form of the forward equation for an associated jump process, as for example outlined in \cite[Section X.3]{Feller} or \cite[Chapter 4.2]{EK}. However, the transition rates here depend on the density $\rho(\cdot, t)$ itself, and this makes the equation nonlinear. 

Observe that formulas~\eqref{forwardequation} and thereby~\eqref{strongform} are consistent with the assumption of mass conservation, easily derived as follows:
\begin{align*}
\frac{\rmd}{\rmd t} \int_{\mathbb{R}_+} \rho(s,t) \, \rmd s &=   \int_{\mathbb{R}_+} \rho(\hat{s},t) \int_{\mathbb{R}_+} K_{\rho(\cdot,t)}(\hat{s} \to s) \, \rmd s  \, \rmd \hat{s} -\int_{\mathbb{R}_+}  \lambda_{\rho(\cdot,t)}(s) \rho(s,t) \rmd s \\
&=  \int_{\mathbb{R}_+} \rho(\hat{s},t) \lambda_{\rho(\cdot,t)}(\hat s) \, \rmd \hat{s} -\int_{\mathbb{R}_+}  \lambda_{\rho(\cdot,t)}(s) \rho(s,t) \rmd s = 0\,.
\end{align*}

In the Niwa model,
\begin{equation} \label{Niwalffinite}
 \lambda^f(s) = p \int_{0}^s \frac{1}{s^2}\hat{s} \, \rmd \hat{s} = \frac{p}{2},
\qquad 
\lambda_{\rho(\cdot,t)}^c(s) 
= q\int_0^\infty f(\hat s,t)\,\rmd\hat{s} = q\,m_0(t),
\end{equation}
where $m_0(t)$ is the zeroth moment of the group size distribution 
$f(\cdot,t)$.  
Hence, the rate $\lambda_{\rho(\cdot,t)}$ is finite as long as
$f(\cdot, t) \in L^1((0, \infty))$. 
Generally, according to \cite[Theorem 6.1]{DLP}, there exists a unique global in time solution to~\eqref{eq:CF1_Niwa} in terms of finite non-negative measures on $(0, \infty)$ for any finite non-negative initial measure. In particular, 
the solution was proved to have a smooth density $f(\cdot,t)$ if the initial group-size distribution has a density $f_{\textnormal{in}}$ that is  
completely monotone. 

\begin{proposition}
\label{p:Niwaexist}
\cite[Theorem~6.1]{DLP}
In the Niwa model, i.e.,~with rates~\eqref{Niwarates}, equation~\eqref{forwardequation} has a unique global (in time) solution $ \rho$ such that $ f(\cdot,t) = N \frac{\rho(\cdot,t)}{\cdot} $ is completely monotone, if $ f(\cdot,0) = f_{\rm in} $ is completely monotone with finite zeroth and first moments.  In this case, the 
zeroth moment $m_0(t)$ in \eqref{Niwalffinite} satisfies 
\[
m_0'(t)=\frac12(p\, m_0- q \,m_0^2),
\]
and remains bounded for all $t>0$.
\end{proposition}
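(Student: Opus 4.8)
The existence, uniqueness, and complete-monotonicity assertions are quoted verbatim from \cite[Theorem~6.1]{DLP}, so the only genuinely new content is the evolution law for the zeroth moment $m_0(t)=\int_0^\infty f(\hat s,t)\,\rmd\hat s$ together with its boundedness. The plan is to obtain the ODE by testing the weak form~\eqref{eq:CF1_Niwa} of the Niwa group-size dynamics against the \emph{constant} test function $\varphi\equiv 1$, and then to read off boundedness from an elementary phase-line analysis of the resulting scalar equation.

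First I would substitute $\varphi\equiv 1$ into~\eqref{eq:CF1_Niwa}. In the coagulation kernel the combination $\varphi(s+\hat s)-\varphi(s)-\varphi(\hat s)$ collapses to $-1$, so that term contributes $-\frac{q}{2}\bigl(\int_0^\infty f\,\rmd s\bigr)^2=-\frac{q}{2}m_0^2$. In the fragmentation kernel $\varphi(s)-\varphi(\hat s)-\varphi(s-\hat s)$ likewise collapses to $-1$; carrying out the inner $\hat s$-integral via $\int_0^\infty\chi_{[0,s)}(\hat s)\,\rmd\hat s=s$ cancels the factor $1/s$ and leaves $+\frac{p}{2}\int_0^\infty f(s,t)\,\rmd s=\frac{p}{2}m_0$. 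Since the left-hand side is $m_0'(t)$, adding the two contributions yields exactly $m_0'=\tfrac12(p\,m_0-q\,m_0^2)$.

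The step that genuinely needs care --- and which I expect to be the main obstacle --- is the use of the non-compactly-supported test function $\varphi\equiv 1$, since~\eqref{eq:CF1_Niwa} is stated only for $\varphi\in C((0,\infty))$. To legitimise this I would approximate $\varphi\equiv 1$ by a sequence $\varphi_n\in C_c((0,\infty))$ with $0\le\varphi_n\le 1$ and $\varphi_n\uparrow 1$ pointwise, apply the weak form for each $\varphi_n$, and pass to the limit by dominated convergence. The dominating integrability is supplied precisely by the propagation of finite moments guaranteed in \cite[Theorem~6.1]{DLP}: finiteness of $m_0(t)$ controls the coagulation double integral through $\int_{(\mathbb{R}_+)^2} f(s,t)f(\hat s,t)\,\rmd s\,\rmd\hat s=m_0^2$, while the cancellation $\frac{f(s,t)}{s}\cdot s=f(s,t)$ reduces the fragmentation integral to $m_0$; finiteness of the first moment $N$ keeps all intermediate quantities well defined. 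Complete monotonicity of $f(\cdot,t)$, via its Bernstein representation, furnishes the decay needed to make these uniform-in-$n$ bounds rigorous.

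Finally, boundedness follows from the structure of the ODE alone. Writing $m_0'=\frac{q}{2}\,m_0\,(p/q-m_0)$ exhibits it as a logistic equation with equilibria $0$ and $p/q$; since $f_{\rm in}\ge 0$ forces $m_0(0)\in[0,\infty)$ finite, the solution is monotone and trapped between $m_0(0)$ and $p/q$, converging to $p/q$ and in particular remaining bounded for all $t>0$. If desired this can be made fully explicit through the closed form $m_0(t)=\dfrac{p/q}{1+\bigl(p/(q\,m_0(0))-1\bigr)e^{-pt/2}}$, which shows directly that no finite-time blow-up occurs.
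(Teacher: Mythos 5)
Your proposal is correct and follows essentially the same route as the source: the paper gives no in-text proof of this proposition, quoting it directly from \cite[Theorem~6.1]{DLP}, and the moment identity is obtained exactly as you do it --- testing the weak form~\eqref{eq:CF1_Niwa} with $\varphi\equiv 1$, so that both kernel combinations collapse to $-1$, the factor $1/s$ cancels against $\int_0^s \rmd\hat s = s$, yielding $m_0'=\tfrac12\left(p\,m_0-q\,m_0^2\right)$, with boundedness then read off the logistic phase line (your closed-form solution confirms no finite-time blow-up). One minor remark: since the weak formulation~\eqref{weakgeneral} is posed for $\varphi\in C((0,\infty))$ and the solutions of \cite{DLP} are finite nonnegative measures with $m_0(t)<\infty$, the bounded test function $\varphi\equiv 1$ is admissible directly, so your truncation-and-dominated-convergence step, while harmless and correctly executed, is extra caution rather than a necessary repair.
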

Further, we recall that in \cite{DLP} the existence of a unique scaling profile $f_{*}$ for the equilibrium $f_{\textnormal{eq}}$ of~\eqref{eq:CF1_Niwa}, depending on $p$, $q$ and $N$, is proven. The profile $f_*$ is completely monotone, 
with exponential decay as $s \to \infty$, and $f_*(s) = \mathcal{O}(s^{- 2/3})$ as $ s \to 0$, so $f_*\in L^1((0,\infty))$. 
Hence, in the case of the Niwa model~\eqref{eq:jumpproc_strong} we immediately obtain an explicit formula for equilibrium solutions $\rho_{\textnormal{eq}}$ of \eqref{forwardequation} which we will use for our numerical studies later: for $p=q=2$ (to which all parameter choices can be reduced), we conclude from \cite[Theorem 5.1]{DLP} via~\eqref{PopDistr} that the unique equilibrium profile $\rho_*(x) = x f_*(x)$ satisfies
\begin{equation} \label{popequprof}
 \rho_{*}(x) = \tilde \gamma_*(x) e^{-\frac{4}{27}x} \,,
\end{equation}
where $\tilde \gamma_*$ is a smooth function with
$$ \tilde \gamma_*(x)  \sim \frac{x^{1/3}}{\Gamma(1/3)}\,, \text{ when } x \to 0, \quad \tilde \gamma_*(x)  \sim \frac{9}{8}\frac{x^{-1/2}}{\Gamma(1/2)}\,, \text{ when } x \to \infty \,.$$
For mass $N > 0$, the equilibrium density $\rho_{\textnormal{eq}}$ is given by
\begin{equation}
\label{e:rhoeq}
\rho_{\textnormal{eq}} (x) = \frac{x f_{\textnormal{eq}}(x)}{\int_0^{\infty} x f_{\textnormal{eq}}(x) \, \rmd x } = \frac{\frac{x}{N} f_{*}\left( \frac{x}{N}\right)}{\int_0^{\infty} \frac{x}{N} f_{*}\left( \frac{x}{N}\right)\, \rmd x }  
= \frac{\frac{x}{N} f_{*}\left( \frac{x}{N}\right)}{N \int_0^{\infty} y f_{*}\left(y\right) \, \rmd y} = \frac{1}{N} \rho_*\left(\frac{x}{N} \right)\,.
\end{equation}

For general coagulation and fragmentation rates, one has to be careful in order to make sure that $\lambda_{\rho(\cdot,t)}$ is finite, either by restricting the class of admissible $\rho$ or by truncating the domain to some compact set $ E \subset (0, \infty)$.

\subsubsection{Jump process} \label{constr_jumpprocess}

We introduce a jump process $(X_t)_{t \geq 0}$ in the following way: 
Denote the set of probability densities on $(0, \infty)$ by 
\[
\mathcal{P}((0, \infty)):= \{ f \in L^1((0, \infty))\,:\, f \geq 0, \, \left| f \right|_ {L^1} = 1\} .
\]
Assume $\rho: (0, \infty) \times (0, \infty) \to \mathbb{R}$ is Borel-measurable 
with $\rho(\cdot, t) \in \mathcal{P}((0, \infty))$ for all $t \geq 0$. 
For every $t \geq 0$, $s>0$, we 
define a probability measure $\mu_t(s,\cdot)$ by 
setting, for each Borel-measurable subset $ \Gamma \subset (0, \infty)$,
\begin{equation} \label{transition_prob}
 \mu_t(s, \Gamma) := \int_\Gamma \hat\mu_t(s\to\hat s)\,\rmd\hat s = \frac{\int_{\Gamma} K_{\rho(\cdot,t)} (s \to \hat s )\, \rmd \hat s}{\lambda(t,s)}
, \qquad \lambda(t,s) := \lambda_{\rho(\cdot,t)}(s).
\end{equation}
We assume that $\lambda(t,s)$ is uniformly bounded,
and that $\lambda(t,s)$ and $\mu_t(s,\Gamma)$ are continuous in $t$ 
for each $s$ and $\Gamma$.
Then, by classical results of Feller \cite{Feller1940}
(see also \cite[section X.3]{Feller})
there is a unique solution of the backward equation 
\begin{equation} \label{eq:transfunc}
P(r,t,s, \Gamma) = \delta_s(\Gamma) + \int_r^t \lambda (u,s) \int \left(P(u,t,\hat s, \Gamma) - P(u,t,s, \Gamma)  \right) \mu_u(s, \rmd \hat s) \rmd u\,,
\end{equation}
for the transition function of a Markov process.
(For generalizations of Feller's results 
without continuity, see \cite[Lemma 4.7.2]{EK} and \cite{FMS2014}.)
Given any initial distribution $\nu\in{\mathcal P}((0,\infty))$, 
there exists a corresponding Markov (jump) process $(X_t)_{t \geq 0}$ 
with initial density $\nu$ and transition function $P(r,t,s, \Gamma)$
(see \cite[Theorem 4.1.1]{EK} and also \cite[Theorem 8.4]{Kallenberg}). 
This process $X_t$ solves the (time-dependent) martingale problem associated to 
the family of generators $\left(A_{t}\right)_{t \geq 0}$ given by
\begin{equation} \label{generator}
A_{t} f(s) = \lambda(t,s) \int_{\mathbb{R}_+} \left( f(\hat s) - f(s) \right) \mu_{t}(s, \rmd \hat s)\,,
\end{equation}
for all measurable and bounded functions $f\colon(0, \infty) \to \mathbb{R}$.
Moreover, the law of $X_t$ is given by 
\[
\nu_t(\Gamma):=
{\mathbb P}\{X_t\in \Gamma\} = \int_0^\infty \nu(\rmd s) P(0,t,s,\Gamma) .
\]
Due to the assumption that $\lambda(t,s)$ is bounded (see \cite{Feller1940}),
the transition function also satisfies the forward equation 
\begin{equation} \label{e:Pforward}
\frac{\partial P(r,t,s,\Gamma)}{\partial t} = 
\int_{\mathbb{R}_+} \lambda(t,\hat s)\mu_t(\hat s,\Gamma)P(r,t,s,{\rmd}\hat s)
- \int_\Gamma \lambda(t,\hat s)P(r,t,s,\rmd\hat s)
\end{equation}
and consequently, integration against $\nu(\rmd s)$ shows that
the law of $X_t$ also satisfies the forward equation
\begin{equation}\label{e:forwardnu}
\frac{\partial \nu_t(\Gamma)}{\partial t} = 
\int_{\mathbb{R}_+} \lambda(t,\hat s)\mu_t(\hat s,\Gamma)
\nu_t(\rmd\hat s) 
- \int_\Gamma \lambda(t,\hat s)\nu_t(\rmd\hat s).
\end{equation}

In the stationary case when $\rho(s,t)=\rho(s)$ is constant in time, 
the Markov process $(X_t)_{t \geq 0}$ can be constructed
in a standard way \cite[Section 4.2]{EK}, 
from a Markov chain corresponding to 
time-independent rates $\lambda(s) = \lambda_{\rho_{\textnormal{eq}}(\cdot)}(s)$
and transition probabilities $\mu(s, \Gamma) = \mu_t(s, \Gamma)$ from \eqref{transition_prob},
and a sequence of independently and exponentially distributed random variables.

\begin{remark}
Considering Proposition~\ref{p:Niwaexist}, if $\rho$ is taken to be any
solution of the Niwa model corresponding with a completely monotone initial
group-size distribution $f_{\rm in}$,  the boundedness and continuity 
assumptions indeed hold,
and the Markov process $X_t$ is well defined. 
For general coagulation-fragmentation rate kernels $a(s,\hat s)$ and $b(s,\hat s)$,
however, we do not address the technical issue of what hypotheses on the kernels
and on the initial data are sufficient to ensure that the boundedness
and continuity assumptions hold.
\end{remark}

\subsection{Self-consistency}
In order to guarantee self-consistency of this construction, we need to verify that 
if the function $\rho(\cdot,t)$ that is used in the definition of $\mu_t$ and
$\lambda(t, \cdot)$ is additionally assumed to be a solution of \eqref{eq:jumpproc_stronggen}
(and hence \eqref{forwardequation}), 
then the law of the process $X_t$ is given by 
\begin{equation}\label{e:nurho}
\nu_t(\Gamma) = \int_\Gamma \rho(s,t)\,\rmd s.
\end{equation}
That is, we need to show that $X_t$ is distributed according to $\rho(\cdot,t)$ for all $t\geq 0$.  

Let $\hat\nu_t(\Gamma)$ denote the right-hand side of \eqref{e:nurho}.
Then, upon integrating \eqref{forwardequation} over $\Gamma$ and using  
the definitions \eqref{transition_prob}, we find that
\begin{equation}\label{e:forwardnuhat}
\frac{\partial \hat\nu_t(\Gamma)}{\partial t} = 
\int_{\mathbb{R}_+} \lambda(t,\hat s)\mu_t(\hat s,\Gamma)
\hat\nu_t(\rmd\hat s) 
- \int_\Gamma \lambda(t,\hat s)\hat\nu_t(\rmd\hat s).
\end{equation}
We have that $\hat\nu_0=\nu_0= \nu$. Thus to infer $\hat\nu_t=\nu_t$ for all
$t\ge0$ we need the initial-value problem for the forward equation
\eqref{e:forwardnu} to have a unique solution with the properties enjoyed by
both $\nu_t$ and $\hat\nu_t$. 

That the requisite uniqueness holds is ultimately a consequence of 
our assumption on the boundedness of the transition rates $\lambda(t,s)$.
This assumption ensures that the solution $P$ of \eqref{e:Pforward} 
is a conservative transition function, by which we mean that 
$P(r,t,s,{\mathbb R}_+)=1$ for all $0\le r\le t$ and $s>0$.

\begin{proposition}
Let $\rho(s,t)$, $\lambda(t,s)$ and $\mu_t(s,\cdot)$ satisfy the hypotheses
stated in the previous subsection. 
Assume $\rho(\cdot,t)$ is a solution of \eqref{eq:jumpproc_stronggen}
and $\rho(\cdot,0)=\nu$.
Then $X_t$ is distributed according to $\rho(\cdot,t)$
for all $t\ge0$.
\end{proposition}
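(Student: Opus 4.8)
The plan is to exploit a fact already assembled in the text preceding the statement: both the law $\nu_t$ of $X_t$ and the candidate measure $\hat\nu_t(\Gamma)=\int_\Gamma\rho(s,t)\,\rmd s$ satisfy \emph{the same} forward equation with \emph{the same} initial datum, so the whole proposition reduces to a uniqueness theorem for that equation. Concretely, \eqref{e:forwardnu} holds for $\nu_t$ by the construction of the process, while integrating \eqref{forwardequation} over $\Gamma$ gives \eqref{e:forwardnuhat} for $\hat\nu_t$; moreover $\hat\nu_0=\nu_0=\nu$ since $\rho(\cdot,0)=\nu$. Because the standing hypotheses force $\rho(\cdot,t)\in\mathcal{P}((0,\infty))$, both $\nu_t$ and $\hat\nu_t$ are probability measures for every $t$, and it remains to show that the initial-value problem for \eqref{e:forwardnu} has at most one solution in the class of finite signed measures.

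For that uniqueness step I would work in the Banach space $M((0,\infty))$ of finite signed measures with the total-variation norm $\|\cdot\|_{TV}$, writing the right-hand side of the forward equation as an operator $\mathcal{L}_t^*$ acting on a measure $\mu$,
\[
(\mathcal{L}_t^*\mu)(\Gamma)=\int_{\mathbb{R}_+}\lambda(t,\hat s)\,\mu_t(\hat s,\Gamma)\,\mu(\rmd\hat s)-\int_\Gamma\lambda(t,\hat s)\,\mu(\rmd\hat s).
\]
Since each $\mu_t(\hat s,\cdot)$ is a probability measure and $\lambda(t,s)\le\Lambda$ uniformly, both terms have total variation at most $\Lambda\|\mu\|_{TV}$, so $\mathcal{L}_t^*$ is bounded with $\|\mathcal{L}_t^*\|\le 2\Lambda$, uniformly in $t$. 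Setting $w_t:=\nu_t-\hat\nu_t$ and subtracting the two forward equations yields the linear homogeneous equation $\partial_t w_t=\mathcal{L}_t^* w_t$ with $w_0=0$. Passing to the Duhamel (integral) form and estimating in total variation gives
\[
\|w_t\|_{TV}\le 2\Lambda\int_0^t\|w_u\|_{TV}\,\rmd u,
\]
so Gronwall's inequality forces $\|w_t\|_{TV}\equiv 0$, i.e.\ $\nu_t=\hat\nu_t$, which is exactly \eqref{e:nurho} for all $t\ge0$. The assumed continuity of $\lambda(t,\cdot)$ and $\mu_t(\cdot,\Gamma)$ in $t$ makes $t\mapsto\mathcal{L}_t^*$ strongly continuous, which is what legitimizes the Duhamel representation.

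I expect the main obstacle to be not the Gronwall estimate itself but the justification that $\nu_t$ and $\hat\nu_t$ genuinely solve the forward equation in a form to which this measure-valued argument applies---in particular that no mass escapes at the boundaries $s\to0$ or $s\to\infty$. This is precisely where the boundedness of $\lambda(t,s)$ does the decisive work: as noted before the statement, it guarantees through Feller's construction \cite{Feller1940} that the transition function $P$ is \emph{conservative}, $P(r,t,s,\mathbb{R}_+)=1$, so the jump process does not explode and the total mass of $\nu_t$ is preserved. This conservativeness is what validates integrating \eqref{forwardequation} against the indicator of $\Gamma$ to obtain \eqref{e:forwardnuhat}, and what keeps the candidate $\hat\nu_t$ a genuine probability measure throughout, closing the argument.
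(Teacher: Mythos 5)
Your proposal is correct, but it proves the proposition by a genuinely different mechanism than the paper. You reduce everything to uniqueness for the linear forward equation \eqref{e:forwardnu} and establish that uniqueness directly in the Banach space of finite signed measures: since each $\mu_t(\hat s,\cdot)$ is a probability measure and $\lambda(t,s)\le\Lambda$ uniformly, the adjoint operator is bounded with $\|\mathcal{L}_t^*\|\le 2\Lambda$, so the difference $w_t=\nu_t-\hat\nu_t$ satisfies, in integrated form,
\[
\|w_t\|_{TV}\le 2\Lambda\int_0^t\|w_u\|_{TV}\,\rmd u,
\]
and Gronwall forces $w\equiv0$. The paper instead avoids any Banach-space ODE machinery: it re-runs Feller's iteration argument from \cite{Feller1940} (also \cite{FMS2014}), integrated against $\nu(\rmd s)$, to show that $\nu_t$ is the \emph{minimal} nonnegative sub-probability solution of \eqref{e:forwardnu}, hence $\nu_t(\Gamma)\le\hat\nu_t(\Gamma)$ for every $\Gamma$; it then uses conservativeness of $P$ to get $\nu_t(\mathbb{R}_+)=1$, and the complementation inequality $1\ge\hat\nu_t(\Gamma)+\hat\nu_t(\mathbb{R}_+\setminus\Gamma)\ge\nu_t(\Gamma)+\nu_t(\mathbb{R}_+\setminus\Gamma)=1$ squeezes out equality. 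What each approach buys: yours is self-contained, standard, and arguably more transparent, needing only the uniform bound on $\lambda$ (plus the routine care you should make explicit, namely that both forward equations hold setwise in integrated form and that $u\mapsto\|w_u\|_{TV}$ is measurable --- e.g.\ by computing the total variation over a countable generating algebra of $(0,\infty)$). The paper's ordering argument is tailored to expose the boundary of validity: with unbounded rates the minimal solution still exists but may fail to be conservative, and then uniqueness genuinely fails --- precisely the scenario the authors flag in the remark after the proposition in connection with gelation and shattering.

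Two small inaccuracies in your write-up are worth correcting, though neither breaks the argument. First, conservativeness of $P$ is not what licenses integrating \eqref{forwardequation} over $\Gamma$ to obtain \eqref{e:forwardnuhat}; that step needs only that $\rho$ solves \eqref{forwardequation} with $\rho(\cdot,t)\in\mathcal{P}((0,\infty))$ and $\lambda$ bounded, and indeed the paper derives \eqref{e:forwardnuhat} before conservativeness enters. In your scheme conservativeness is actually dispensable for the uniqueness step (it reappears as a byproduct, since $\nu_t=\hat\nu_t$ and $\hat\nu_t(\mathbb{R}_+)=1$); in the paper's proof it is essential. Second, you do not need strong (TV-norm) continuity of $t\mapsto\mathcal{L}_t^*$ to justify the Duhamel form: pointwise continuity of $\lambda(t,s)$ and $\mu_t(s,\Gamma)$ in $t$, together with dominated convergence, gives continuity of $t\mapsto(\mathcal{L}_t^*w_t)(\Gamma)$ for each fixed $\Gamma$, which already lets you integrate the pointwise derivative setwise --- all the Gronwall estimate requires.
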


\begin{proof}
We recall that the proof that $P(r,t,s,{\mathbb R}_+)\equiv1$
follows an iteration argument (see \cite[Theorem 1]{Feller1940}) 
which is also sketched in \cite[Appendix to X.3]{Feller} in the time-homogeneous case.
(The same result is established without continuity conditions in \cite[Theorem 4.3]{FMS2014}.)
By repeating the proof 
after integration against $\nu(\rmd s)$, it follows that $\nu_t(\Gamma)$ is the
minimal non-negative solution of \eqref{e:forwardnu} with $\nu_0=\nu$ 
that yields a measure satisfying $0\le\nu_t(\Gamma)\le1$ for each $t\ge0$. 
By consequence, $\nu_t(\Gamma)\le \hat\nu_t(\Gamma)$ for all $t$ and all $\Gamma$. 
Because $P$ is conservative, it follows $\nu_t({\mathbb R}_+)=1$. 
Hence, for all $\Gamma$,
\[
1 \ge \hat\nu_t({\mathbb R}_+) 
= \hat\nu_t(\Gamma) + \hat\nu_t({\mathbb R}_+\setminus\Gamma)
\ge \nu_t(\Gamma) + \nu_t({\mathbb R}_+\setminus\Gamma)
= \nu({\mathbb R}_+) = 1,
\]
and from this it follows $\hat\nu_t(\Gamma)=\nu_t(\Gamma)$.
\end{proof}

\begin{remark} As illustrated by Feller in \cite{Feller1940},
with unbounded jump rates $\lambda(t,s)$ it is possible for the natural solution  
$P(r,t,s,\Gamma)$ of the backward equation \eqref{eq:transfunc}
to fail to conserve total probability.
We remark that it would be interesting to investigate how
this may be related to the phenomena of gelation and shattering
in solutions of the general coagulation-fragmentation equation 
\eqref{weakgeneral}. 
\end{remark}

\section{Approximation of the stochastic process by a numerical scheme} 
\subsection{The algorithmic scheme used in this paper} \label{algorithm}
We approximate the jump process $(X_t)_{t \geq 0}$ defined in Section~\ref{constr_jumpprocess} in and out of equilibrium by the following numerical scheme. 
Recall that kernels $K_{\rho}$ depend on the probability density $\rho$. Hence, if the process is not stationary, i.e.~in equilibrium, we need to estimate $\rho$ at every time step of the numerical scheme. Assuming $\rho = \rho_{\textnormal{eq}}$ where $\rho_{\textnormal{eq}}$ can be computed or at least approximated, we can study the dynamics in equilibrium. We will make use of our knowledge of the equilibrium profile $\rho_{*}$ in the case of the Niwa model. 

In the following we explain the algorithm for the case in which the initial distribution is not the equilibrium distribution, and therefore we have to estimate $\rho(\cdot,t_n)$ at every time $t_n$. It will become clear how the algorithm is conducted for fixed $\rho_{\textnormal{eq}}$.

At the beginning, we fix the following quantities:
\begin{itemize}
\item an initial distribution $\rho_0$ on an interval $(0,L)$ for some (large) $L > 0$,
\item the coagulation coefficient $a(s, \hat s)$ and fragmentation coefficient $b(s, \hat s)$,
\item the total number of individuals, i.e.~the mass $N$,
\item the number of sample individuals/particles $\tilde{N}$ (not to confuse with the total mass),
\item the time step size $\rmd t$,
\item the bin size $h$ for a partition of the domain\,.
\end{itemize}
We simulate the jump process for the individuals $1, \dots, \tilde N$ by computing, for each time step $i$, the entries of the vector $S(i) = (S_1 (i), \dots, S_{\tilde N}(i))$ where the entry $S_k(i)$ equals the group size of the $k$th individual. In other words, $S(i)$ denotes the vector which contains all the obtained cluster sizes at time $(i-1)\rmd t$. The initial vector $S(1)$ is chosen
according to $\rho_0$, for example a uniform distribution. We divide the interval $(0,L)$ into $M:= Lh$ bins of length $h$, denoted by $B_1, \dots, B_M$.

At every time step $i\geq 1$, we proceed as follows:
\begin{enumerate}
\item We estimate the coagulation and fragmentation probabilities for the centres $(b_l)_{l=1, \dots, M}$ of the bins $(B_l)_{l=1, \dots, M}$:
\begin{itemize}
\item We approximate the density $\rho(s,(i-1) \rmd t)$ by
\begin{equation} \label{densityestimate}
\hat{\rho}(s,i) = \sum_{l=1}^M \frac{n_l}{h\tilde{N}} \chi_{B_l}(s),
\end{equation}
where $n_l$ is the number of entries of $S(i)$ in $B_l$.
\item  Now we calculate the following quantities for all bin centres $b_k, b_l$:
\begin{align*}
K_{\hat{\rho}(\cdot, i)} (b_l \to b_k)&= a(b_l, b_k - b_l) N \frac{\hat{\rho} (b_k - b_l, i)}{b_k - b_l} \chi_{[0, b_k]}(b_l) + b(b_k, b_l - b_k) \frac{b_k}{b_l} \chi_{[b_k, \infty)}(b_l)\\
&= K_{\hat{\rho}(\cdot, i)}^c(b_l \to b_k) + K^f(b_l \to b_k), \\
\lambda_{\hat \rho(\cdot,i)}(b_l) &= \sum_{k=1}^M K_{\hat{\rho}(\cdot, i)}^c(b_l \to b_k) h + \sum_{k=1}^M K^f(b_l \to b_k) h \\
&= \lambda_{\hat{\rho}(\cdot, i)}^c(b_l) + \lambda^f(b_l)\,.
\end{align*}
\end{itemize}
\item We decide for each entry $S_k(i)$ of $S(i)$ if a jump happens, and if yes, where the jump goes to, in the following way:
\begin{itemize}
\item For each $S_k(i)$ , we determine the bin $B_{l_k}$ in which it is contained.
Furthermore, we generate a random number $r \in [0,1]$ from the uniform distribution on the unit interval.
\item If $ r > 1- \exp\left(-\lambda_{\hat \rho(\cdot,i)}(b_{l_k}) \rmd t \right)$, nothing happens and $S_k(i)$ stays in the same bin. Otherwise a jump happens.
\item If a jump happens, we generate another random number $r_1 \in [0,1]$ from the uniform distribution:
\begin{itemize}
\item If $r_1 \leq \frac{\lambda_{\hat{\rho}(\cdot, i)}^c(b_{l_k})}{\lambda_{\hat \rho(\cdot,i)}(b_{l_k})}$, coagulation happens: \\
in this case we generate another random number $r_2 \in [0,1]$ from the uniform distribution and calculate for $1 \leq m \leq M$ the sum 
\begin{equation*} 
P(m) := h\sum_{r=1}^m \frac{K_{\hat{\rho}(\cdot, i)}^c(b_{l_k} \to b_r)}{\lambda_{\hat{\rho}(\cdot, i)}^c(b_{l_k})}  
\end{equation*}  
until $P(m^*)> r_2$. Then we set $S_k(i+1) \in  B_{m^*}$.
\item If $r_1 > \frac{\lambda_{\hat{\rho}(\cdot, i)}^c(b_{l_k})}{\lambda_{\hat \rho(\cdot,i)}(b_{l_k})}$, fragmentation happens: \\
in this case we generate another random number $r_3 \in [0,1]$ from the uniform distribution
and calculate the sum  
\begin{equation*}
P(m) := h\sum_{r=1}^m \frac{K^f(b_{l_k} \to b_r)}{\lambda^f(b_{l_k})}  
\end{equation*}
until $P(m^*)> r_3$. Then we set $S_k(i+1) \in  B_{m^*}$.
\end{itemize}
\end{itemize}
\item In this way, we obtain the vector $S(i+1)$, which contains all the cluster sizes corresponding with the $\tilde{N}$ individuals at time $i \rmd t$.  For time $(i+1)\rmd t$, the procedure starts again with the first step.
\end{enumerate}
Approximating the exponential distribution with time discretization step size $\rmd t$, the algorithm induces a Markov chain, simulating $\tilde N$ trajectories of the jump process $(\tilde X_t)_{ t \geq 0}$ corresponding with the generators
\begin{equation} \label{generatorsimulation}
A_{\hat \rho} f(s) = h \sum_{l,k=1}^M \chi_{B_l}(s) \left( f(b_k) - f(b_l) \right) K_{\hat \rho}(b_l, b_k)\,,
\end{equation}
acting on the bounded and measurable functions $f : (0, \infty) \to \mathbb{R}$.
The process $(\tilde X_t)_{ t \geq 0}$ approximates the jump process from Section~\ref{constr_jumpprocess} with generator~\eqref{generator} for $M \to \infty, L \to \infty$, and, for $\rmd t$ small enough, the simulations give accurate results, as we will demonstrate in Section~\ref{nuumval}. 
\begin{remark} \label{rem:equ}
If we assume that the population distribution is in equilibrium $\rho = \rho_{\textnormal{eq}}$ and we can compute or at least approximate $\rho_{\textnormal{eq}}$ with high accuracy, we can use the algorithm above to simulate the jump process by simply replacing $\hat{\rho}(s,i)$ as in~\eqref{densityestimate} by $\rho_{\textnormal{eq}}(s)$ at each time step $i$. In this case, it is also sufficient to only track one individual for analyzing typical paths; that means that the vector $S$ only has one entry. The approximated process $(X_t)_{\geq 0}$ is a Markov process, 
as indicated in subsection~\ref{constr_jumpprocess} above.

\end{remark}
\begin{remark}
Note that we could also adopt the domain for each step by considering $(0, \max_{1 \leq k \leq \tilde N} S_k(i) + mh)$ for some $m \in \mathbb{N}$, instead of $(0,L)$. However, if $L$ is large enough, the effect of such a measure is vanishingly small due to the fast decay in our models and, therefore, not necessary to obtain an accurate scheme.
\end{remark}
\subsection{Comparison to scheme by Eibeck and Wagner}
As mentioned in the Introduction, there is a long history of stochastic particle methods for coagulation (and fragmentation) equations. Since the nonlinearity is contained in the coagulation terms, works on pure coagulation equations are highly relevant for our class of equations.
For pure coagulation, the standard stochastic model, often referred to as a
Marcus-Lushnikov process \cite{Marcus68, Lushnikov78}, describes a Markov jump
process that models the coagulation of clusters of size $s$ and $\hat s$ to form a single cluster of size $s+\hat s$  with rate kernel $a(s,\hat s)$. 
In quite a number of studies (to be brief, we mention only \cite{Jeon,Norris,EW2003}), the empirical measure for the group-size distribution has been related directly to the coagulation part of the Smoluchowski equation~\eqref{weakgeneral}, in the so-called hydrodynamic limit as the number of particles becomes large.

In work more closely related to the present study, Eibeck and Wagner \cite{EW2001} 
developed a different approximation scheme to study the following mass flow equation for $t\geq 0$ and $\varphi$ continuous and compactly supported:
\begin{equation} \label{massflow}
\int_{0}^{\infty} \varphi(s) Q(\rmd s,t) = \int_{0}^{\infty} Q_0(\rmd s) + \int_0^t \int_{(\mathbb{R}_{+})^2} (\varphi(s+\hat{s}) - \varphi(s)) \frac{a(s,\hat{s})}{\hat s} Q(\rmd s,r) Q(\rmd \hat s,r)\rmd r\,.
\end{equation}
This is the weak form of~\eqref{eq:jumpproc_stronggen} (cf. also~\eqref{massintegral}), which we have referred to as evolution equation of the population density, in the case of pure coagulation. 
A solution $Q$ of~\eqref{massflow} is required to be in the set of all continuous paths with values in the set of non-negative Borel measures, i.e.~$Q \in \mathbb{C}([0, \infty), \mathcal{M}(0,\infty))$.

The solution of \eqref{massflow} is approximated 
by a jump process for the empirical measure of an interacting particle system, formalized as a c\`adl\`ag process with values in a subset of $\mathcal{M}(0,\infty)$. 
This jump process models the interaction of clusters of size $s$ and $\hat s$ to result in a pair of clusters having sizes $s+\hat s$ and $\hat s$. 
In this way, the distribution of cluster sizes in the particle system
is used to estimate particle coagulation rates that determine jump rates for a fixed number of particles,
in a way similar to the algorithm described in the previous subsection.

In more detail, suppose that $a(s, \hat s) \leq h(s) h(\hat s)$ for some continuous function $h$ where $\frac{h(s)}{s}$ is non-increasing. For $N \in \mathbb{N}$, $b_N > 0$ and fixed $\beta > 0$, they define the set of measures
\begin{equation} \label{MbetaN}
\mathcal{M}_{\beta}^N = \left\{ p = \frac{1}{N} \sum_{i=1}^N \delta_{s_i} \in \mathcal M((0, \infty)) \,:\, s_i \in (0,b_N], \int_0^{\infty} \frac{h(s)}{s} p(\rmd s) \leq \beta \right\}\,.
\end{equation}
Defining the map
\begin{equation*}
J(p,s,\hat s) =  \begin{cases} p-\frac{1}{N} \delta_s + \frac{1}{N} \delta_{s+\hat s}, \quad & s+\hat s \leq b_N\,,\\
p-\frac{1}{N} \delta_s, \quad & s+\hat s > b_N\,,
\end{cases}
\end{equation*}
they introduce the generator on continuous and bounded functions $\Phi: \mathcal{M}_{\beta}^N \to \mathbb{R}$
\begin{equation}\label{generatormassflowjump}
\mathcal{G}^N \Phi(p) =  \frac{1}{N} \sum_{1=i,j}^N  \left[ \Phi(J(p,s, \hat s)) - \Phi(p) \right] \frac{a(s, \hat s)}{\hat s} \,,
\end{equation}
which is shown to correspond with a jump process $U^N$. Introducing the set
\begin{equation} \label{Mbeta}
\mathcal{M}_{\beta} = \left\{ p \in \mathcal M((0, \infty)) \,:\, \int_0^{\infty} \frac{h(s)}{s} p(\rmd s) \leq \beta \right\} \supset \mathcal{M}_{\beta}^N\,,
\end{equation}
one can view $U^N$ as a c\`adl\`ag process on $\mathcal{M}_{\beta}$, i.e.~$U^N \in \mathbb{D}([0, \infty), \mathcal{M}(0,\infty))$.

Under these assumptions, Eibeck and Wagner \cite{EW2001} prove weak convergence of $U^N$ to the solution $Q$ of~\eqref{massflow} for $U^N_0 \to Q_0 \in \mathcal{M}_{\beta}$, as $N \to \infty, b_N \to \infty$, and in \cite{EW2003} they provide a similar result for the case where fragmentation is added.
In the corresponding algorithm, the coagulation kernel is replaced by the majorant product kernel $h(s)h(\hat s)$ which leads to a simple computation of the exponentially distributed waiting time for the collision and an independent generation of collision partners (cf. also \cite{EW2000} for the use of majorant kernels). When the collision partners $s_i$ and $s_j$ have been chosen according to the probabilities 
$$\frac{h(s_i)}{\sum_{k=1}^N h(x_k)} \quad \text{and} \quad \frac{h(s_j)/s_j}{\sum_{k=1}^N h(x_k)/x_k},$$
the jump happens with acceptance probability 
$$\frac{a(s_i,s_j)}{h(s_i)h(s_j)},$$
and, in this case, $s_i$ is removed and, if $s_i + s_j \leq b_N$, $s_i + s_j$ is added to the points of the empirical measure.

Note that the essential difference between our scheme and such a method concerns the fact that we simulate single trajectories of individuals jumping between groups of different sizes, while the Eibeck-Wagner algorithm simulates the evolution of the population distribution as a whole as represented by the empirical measure. We track individual trajectories on the state space and therefore we can analyze the statistical properties of such trajectories, as we will see in Section~\ref{sec:statistics}.

\section{Numerical simulations} \label{nuumval}
In the following, we are using the algorithm developed in Section~\ref{algorithm} to simulate the population dynamics for the coagulation-fragmentation model~\eqref{weakgeneral}, or~\eqref{eq:jumpproc_stronggen} in terms of the population density, with coagulation rates $a(s, \hat s)$ and fragmentation rates $b(s, \hat s)$. Firstly, we validate the algorithm by working in the Niwa model (constant rates, see~\eqref{DegondrateC}) where the simulation results can be compared to a known equilibrium distribution. Furthermore, we use the algorithm to study the equilibrium and convergence to equilibrium in the cases of random and polynomial rates which demonstrates the flexibilty of our numerical scheme as opposed to previous ones, see~\cite{DE}.

\subsection{Constant coagulation and fragmentation rates} \label{constcoagfrag}
First we work with the Niwa model~\eqref{eq:CF1_Niwa}, or~\eqref{eq:jumpproc_strong} in terms of the population density, and compare our computation with analytical results. We conduct the numerical scheme described in Section~\ref{algorithm} until a certain time $T > 0$, determining $\hat{\rho}(\cdot,T/\rmd t)$ as in~\eqref{densityestimate}. Using the definition of the population distribution~\eqref{PopDistr}, we can determine 
$$\hat{f}(s,T/\rmd t):=  \frac{N \, \hat{\rho}(s,T/\rmd t))}{s}$$
as an approximation of the size distribution $f(\cdot,T)$ and compare the results with the analytic predictions. 

For doing so we choose total mass $N =1$ and $\tilde{p}=\tilde{q}=1$. Recall from~\cite{DLP} that any other combination of parameters can be reduced to this case by rescaling.
According to \cite{DLP}, the equilibrium size distribution $f_{\textnormal{eq}}$ for \eqref{eq:CF1_Niwa} can be expanded as a series in the following way:
\begin{equation} \label{expansion}
f_{\textnormal{eq}}(x) = \frac{x^{-2/3}}{3} \sum_{n=0}^{\infty} \frac{(-1)^n}{\Gamma(\frac{4}{3} - \frac{2}{3}n)} \frac{x^{n/3}}{n!}\,,
\end{equation}
where $\Gamma$ denotes the gamma function.
We denote the partial sums by
\begin{equation} \label{expansion_terms}
f_{K}(x) = \frac{x^{-2/3}}{3} \sum_{n=0}^{K} \frac{(-1)^n}{\Gamma(\frac{4}{3} - \frac{2}{3}n)} \frac{x^{n/3}}{n!}.
\end{equation}

\subsubsection{Jump process out of equilibrium} \label{outofequilib}
We compute $\hat{f}(\cdot, T/\rmd t)$ according to the algorithm introduced above for $L=30$, $T=20$, $\rmd t=0.01$, bin size $h=0.05$ and $\tilde{N}=10000$ sample individuals which are initially distributed according to the uniform distribution $\rho_0$ on $[0,L]$. As for most simulations in the following, the mass is normalized to $N=1$. In Figure~\ref{fig:fixedrates}, we have averaged $\hat{f}(\cdot, T/\rmd t)$ over larger bin sizes $h_1 =1$ in order to obtain a smoother picture and used linear interpolation to create a continuous plot. The figure compares the computed density according to the algorithm with the analytical approximation, using $f_{50}$~\eqref{expansion_terms}. We show the results on a log-log and a semi-log scale, where log denotes the decadic logarithm in the following, unless stated otherwise. We observe that the algorithm produces convergence to a distribution that approximates the analytical expansion very well. For larger sizes, there are small deviations from the equilibrium due to the extremely small number of observations in this part of the domain. Overall, the stochastic method can be seen to be highly accurate.
\begin{figure}[H]
\centering
\begin{subfigure}{.4\textwidth}
  \centering
  \includegraphics[width=1\linewidth]{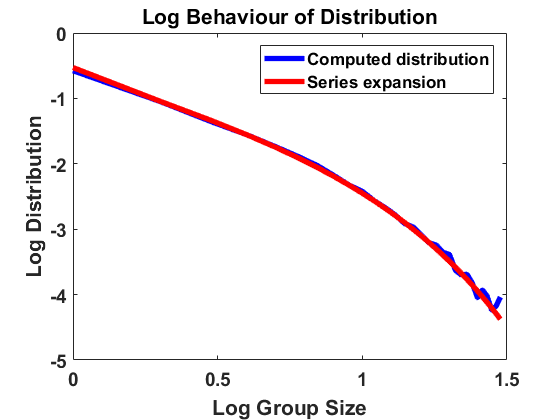}
  \caption{Distributions on log-log scale}
  \label{fig_series1}
\end{subfigure}%
\begin{subfigure}{.4\textwidth}
  \centering
  \includegraphics[width=1\linewidth]{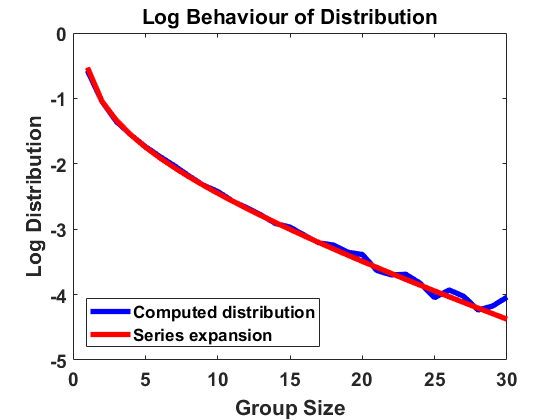}
  \caption{Distributions on semi-log scale}
  \label{fig_series2}
\end{subfigure}
\caption{Equilibrium solution of the Niwa model obtained by the Markov jump process out of equilibrium. For mass $N =1$, bin size $h=0.05$ and $\tilde N=10000$ sample individuals, we compute the size distribution $\hat{f}(\cdot, T/\rmd t)$ up to time $T=20$ following the algorithm introduced in Section~\ref{algorithm} for the Niwa model~\eqref{eq:CF1_Niwa}, starting with a uniform distribution $\rho_0$. We average $\hat{f}(\cdot, T/\rmd t)$ over larger bin sizes $h_1 =1$ and display the interpolated plot together with $f_{50}$ as given in \eqref{expansion_terms}.}
\label{fig:fixedrates}
\end{figure}
\subsubsection{Jump process in equilibrium}
The simulation in equilibrium works with the same algorithm as in Section~\ref{algorithm} but replaces $\hat \rho (\cdot, i)$ at each time step $i$ by the approximation of the equilibrium density $\rho_{\textnormal{eq}}(\cdot)$, obtained from $f_{50}$ as given in \eqref{expansion_terms}, via~\eqref{PopDistr}, see Remark~\ref{rem:equ}. In this situation, 
it is sufficient to approximate the Markov process $X_t$ by simulating the trajectory of a single individual, deploying a simple Monte-Carlo algorithm.

For $L= 30, T=10000$, $\rmd t=0.01$ and bin size $h=0.05$, we approximate the stationary population distribution $\rho_{\textnormal{eq}}$ by measuring time averages of a single trajectory on the interval $(0,L)$ according to the modified algorithm (Remark~\ref{rem:equ}), and, in Figure~\ref{fig:equialg}, we compare the corresponding stationary size distribution with the analytical prediction given by $f_{50}$. We again average over the larger bin size $h_1 =1$ and show continuous plots. As in Figure~\ref{fig:fixedrates}, we show the results on a log-log and a semi-log scale and observe exactly the same as in Section~\ref{outofequilib}. The method is highly accurate except for small deviations from the equilibrium for large group sizes due to the extremely small number of observations in this part of the domain. Hence, we observe that the numerical scheme for simulating the  Markov jump process in equilibrium is consistent with the analytical results.
\begin{figure}[H]
\centering
\begin{subfigure}{.4\textwidth}
  \centering
  \includegraphics[width=1\linewidth]{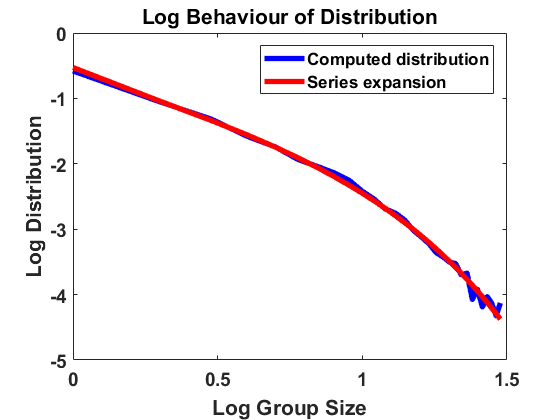}
  \caption{Distributions on log-log scale}
  \label{fig_series1_equialg}
\end{subfigure}%
\begin{subfigure}{.4\textwidth}
  \centering
  \includegraphics[width=1\linewidth]{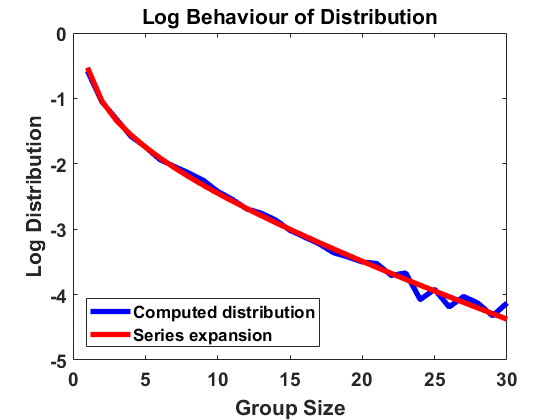}
  \caption{Distributions on semi-log scale}
  \label{fig_series2_equialg}
\end{subfigure}
\caption{Equilibrium solution of the Niwa model obtained by the Markov jump process in equilibrium. For mass $N =1$ and bin size $h=0.05$, we evaluate the time averages of one path in equilibrium up to time $T=10000$ according to the algorithm introduced in Section~\ref{algorithm} (see Remark~\ref{rem:equ}) for the Niwa model~\eqref{eq:CF1_Niwa}, using the equilibrium population density approximated by $f_{50}$ \eqref{expansion_terms} via~\eqref{PopDistr}. As in Fig.~\ref{fig:fixedrates}, we compare the distribution obtained from measuring the time averages on the interval $(0, 30)$ to the analytical approximation of the equilibrium density of size distributions, $f_{\textnormal{eq}}$, again taking $f_{50}$. }
\label{fig:equialg}
\end{figure}

\subsection{Non-constant coagulation and fragmentation rates}
As opposed to the analysis in \cite{DLP} and most numerical methods presented in \cite{DE}, the jump process approach and the associated algorithm do not rely on constant coagulation and fragmentation rates $q$ and $p$. Hence, we use the flexibility of the algorithm to investigate non-constant choices for $a(s, \hat s)$ and $b(s, \hat s)(s+ \hat s)$. Hereby, we test the sensitivity of the results to changes in the model. We observe that random rates produce a clearly different outcome if the variance is high. Similarly, the equilibrium corresponding with polynomial rates separates from the Niwa equilibrium with increasing order of the polynomials. Furthermore, we study random and polynomial variations of the Aizenman Bak model \cite{AizenmanBak} where $a(s, \hat s)$ and $b(s, \hat s)$ are constant and the equilibrium size distribution $f_{\textnormal{eq}}^{AB}$ satsifies the detailed balance condition
$$a(s, \hat s) f_{\textnormal{eq}}^{AB}(s) f_{\textnormal{eq}}^{AB}(\hat s)= b(s, \hat s) f_{\textnormal{eq}}^{AB}(s + \hat s)\,.$$
\subsubsection{Random rates}
In this section, we consider the coagulation and fragmentation rates $a_t(s, \hat s)$ and $b_t(s, \hat s)(s+ \hat s)$ in model~\eqref{weakgeneral}, and thereby~\eqref{strongform}, to be time-dependent. Furthermore, for all $t\geq 0$ and $(s, \hat s) \in \mathbb{R}_+^2$, under preservation of symmetry, they are assumed to be log-normally distributed $\delta$-correlated random variables with mean $q$ or $p$ respectively and standard deviation $\sigma$.  This means that for all $t \geq 0$, $ s, \hat s \in \mathbb{R}_+$ the rates are sampled according to 
\begin{equation} \label{randomrates}
\ln(a_t(s, \hat s)) = \ln(a_t(\hat s,  s)) \sim \mathcal{N}(\ln q,\sigma^2)\,, \quad  \ln(b_t(s, \hat s)(s+ \hat s))= \ln(b_t( \hat s, s)(s+ \hat s)) \sim \mathcal{N}(\ln p,\sigma^2) \,,
\end{equation}
and the correlations are given by
\begin{equation} \label{deltacorr}
\mathbb{E} \left[ \ln(a_{t_1}(s_1, \hat s_1)) \ln(a_{t_2}(s_2, \hat s_2))\right] = \sigma^2 \delta(t_1 - t_2) \, \delta \left( \min\{\left| s_1 - s_2 \right| + \left| \hat s_1 - \hat s_2 \right|, \, \left| s_1 - \hat s_2 \right| + \left| \hat s_1 - s_2 \right|\} \right)\,,
\end{equation}
and analogously for $b_t(s, \hat s)(s+ \hat s)$.
If $\sigma = 0$, the model coincides with the Niwa model~\eqref{eq:CF1_Niwa}.

Setting again $\tilde p = \tilde q = 1$ and thereby $p=q=2$, we compute $\hat{f}_h^N(\cdot, T/\rmd t)$ according to the algorithm introduced in Section~\ref{algorithm}, now sampling $a_t(s, \hat s)$ and $b_t(s, \hat s)(s+ \hat s)$ according to~\eqref{randomrates} independently for every bin and at every time step. For the simulations displayed in Figure~\ref{fig:randomrates}, we have chosen bin size $h=0.05$, $\tilde{N}=20000$ sample individuals and time length $T=20$, starting with a uniform distribution $\rho_0$ and comparing the results for different values of the standard deviation $\sigma$. We use the time step size $\rmd t=0.01$ for $\sigma \leq 3$ and $\rmd t =0.001$ for $\sigma = 5$ to account for possible higher jump rates. As before, we average $\hat{f}_h^{\tilde{N}}(\cdot, T/\rmd t)$ over larger bin sizes $h_1 =1$, use linear interpolation to create a continuous plot and compare the computed distribution to the analytical approximation $f_{50}$ in a log-log scale. 
\begin{figure}[H]
\centering
\begin{subfigure}{.3\textwidth}
  \centering
  \includegraphics[width=1\linewidth]{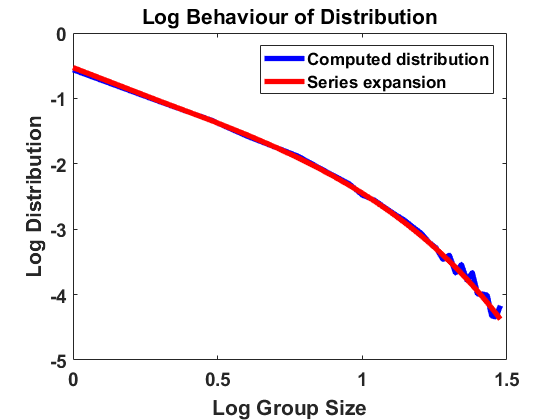}
  \caption{$\sigma =1$}
  \label{fig_series1_random_1}
\end{subfigure}%
\begin{subfigure}{.3\textwidth}
  \centering
  \includegraphics[width=1\linewidth]{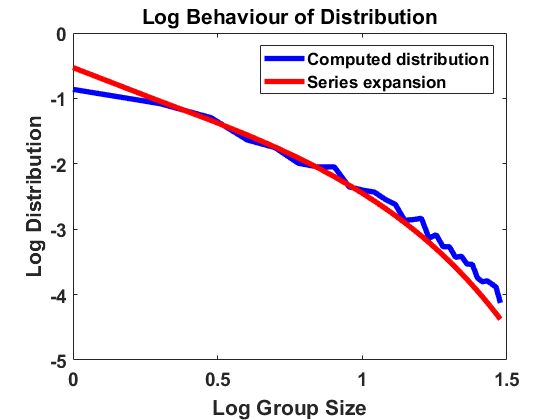}
  \caption{$\sigma =3$}
  \label{fig_series1_random_3}
\end{subfigure}%
\begin{subfigure}{.3\textwidth}
  \centering
  \includegraphics[width=1\linewidth]{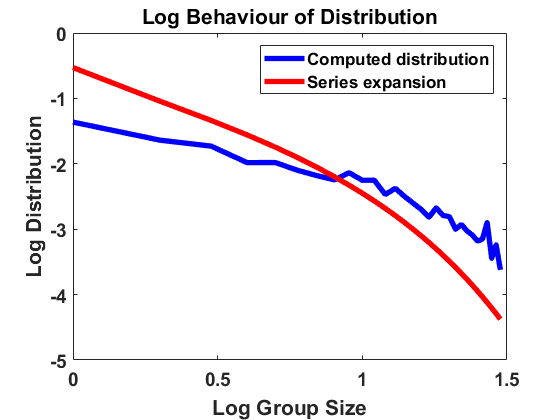}
  \caption{$\sigma =5$}
  \label{fig_series1_random_5}
\end{subfigure}%
\caption{Niwa model with fluctuating coagulation and fragmentation rates. For bin size $h=0.05$, $\tilde{N}=20000$ sample individuals and uniform initial distribution $\rho_0$, we compute the size distribution $\hat{f}(\cdot, T/\rmd t)$ up to time $T=20$ following the algorithm introduced in Section~\ref{algorithm}, with coagulation and fragmentation rates given randomly as in \eqref{randomrates}, for $p=q=2$ and standard deviation $\sigma = 1,3,5$. We average $\hat{f}(\cdot, T/\rmd t)$ over larger bin sizes $h_1 =1$ and display the interpolated plot together with $f_{50}$ as given in \eqref{expansion_terms} in a log-log scale.}
\label{fig:randomrates}
\end{figure}
For $\sigma \leq 1$ the results are almost identical to the ones before, showing almost perfect accordance with the analytical expansion. Hence, the equilibrium appears to be robust under small random fluctuations. However, for $\sigma=3$ we already observe a small discrepancy and for $\sigma = 5$ a clear discrepancy to the model with constant rates. The fact that mass is typically shifted to larger sizes suggests that high random fluctuations favour in average coagulation over fragmentation, even though the fluctuations are equally distributed for both rates. Due to the time-dependent and thereby non-autonomous nature of the random rates, the size distribution cannot reach an equilibrium but a state one could describe as \emph{almost steady}, characterized by small fluctuations around an expected distribution. Making sure such a state is reached here, we have compared the simulation results at $T=20$, $T=30$ and at $T=40$ and observed almost identical behaviour of the size distribution.

Moreover, we consider random fluctuations around the Aizenman-Bak model \cite{AizenmanBak}. This means that for all $t \geq 0$, $ s, \hat s \in \mathbb{R}_+$ the coagulation and fragmentation rates are distributed according to
\begin{equation} \label{randomrates2}
\ln(a_t(s, \hat s)) = \ln(a_t(\hat s, \ s)) \sim \mathcal{N}(\ln q,\sigma^2)\,, \quad  \ln(b_t(s, \hat s))= \ln(b_t( \hat s, s)) \sim \mathcal{N}(\ln p,\sigma^2) \,,
\end{equation}
where the correlations are given as in~\eqref{deltacorr}. Analagously to before, the case $\sigma = 0$ coincides with the Aizenman-Bak model with $a(s, \hat s) = q$ and  $b(s, \hat s) = p$. Setting $p=q=2$, the stationary size distribution is known to be a simple exponential distribution with parameter $1$, i.e.~the stationary density of the size distribution is given by 
$$f_{\textnormal{eq}}^{AB} (s) =e^{-s}.$$

We use the jump algorithm as before to approximate the equilibrium distribution and compare it to the equilibrium of the Niwa model according to the series expansion as well as the equilibrium density $f_{\textnormal{eq}}^{AB} (s) =e^{-s}$ , as shown in Figure~\ref{fig:randomrates2}. First, we set $\sigma = 0$ to compare the distributions without random fluctuations. We observe that the higher fragmentation rates in the Aizenman-Bak model lead to a shift of mass to smaller group sizes for $f_{\textnormal{eq}}^{AB}$ compared with $f_{\textnormal{eq}}$ and that the algorithm approximates $f_{\textnormal{eq}}^{AB}$ very well. Furthermore, we choose $\sigma = 1$ and $\sigma = 5$ to observe that, for small noise, the curves are again similar to the case $\sigma=0$, whereas, for larger noise, the equilibrium for~\eqref{randomrates2} loses mass in the range of small group sizes and gets closer to the equilibrium $f_{\textnormal{eq}}$ \eqref{expansion} for model~\eqref{eq:CF1_Niwa}. In both the Niwa and the Aizenman-Bak model we observe that random rates with large noise drive the size distributions away from the deterministic equilibrium in the direction of a uniform distribution.
\begin{figure}[H]
\centering
\begin{subfigure}{.3\textwidth}
  \centering
  \includegraphics[width=1\linewidth]{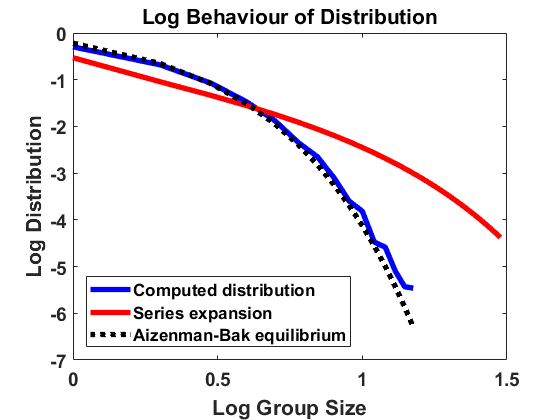}
  \caption{$\sigma = 0$}
  \label{fig_3_random_0}
\end{subfigure}%
\begin{subfigure}{.3\textwidth}
  \centering
  \includegraphics[width=1\linewidth]{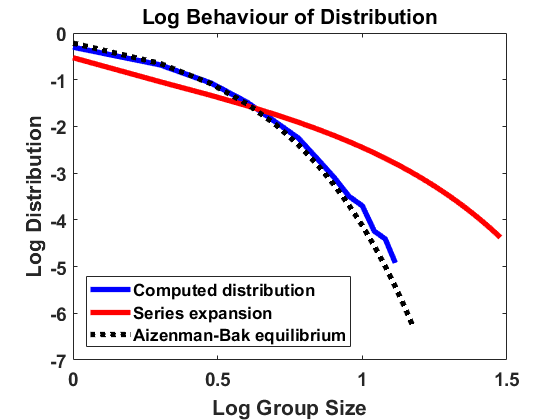}
  \caption{$\sigma = 1$}
  \label{fig_3_random_1}
\end{subfigure}%
\begin{subfigure}{.3\textwidth}
  \centering
  \includegraphics[width=1\linewidth]{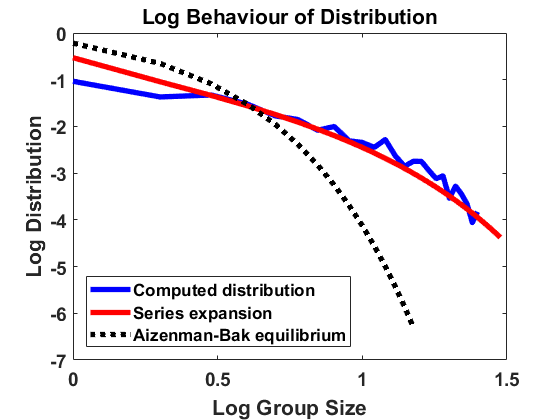}
  \caption{$\sigma = 5$}
  \label{fig_3_random_5}
\end{subfigure}%
\caption{Aizenman-Bak model with fluctuating coagulation and fragmentation rates. Given the same parameters as in Figure~\ref{fig:randomrates}, the results of the analogous simulation, following the algorithm with coagulation and fragmentation rates given randomly as in \eqref{randomrates2} with $p=q=2$, are displayed for $\sigma = 0,1,5$ and compared to the expansion $f_{50}$ from the Niwa model and the stationary density $f_{\textnormal{eq}}^{AB}(s) = e^{-s}$ of the Aizenman-Bak model. The case $\sigma = 0$ corresponds to the Aizenman-Bak model where the equilibrium satisfies the detailed balance condition.}
\label{fig:randomrates2}
\end{figure}

\subsubsection{Polynomial rates}
Furthermore, we consider polynomial coagulation and fragmentation rates. In terms of the physical model describing the dynamics of animal group aggregation, it seems plausible that the fragmentation probability increases with the group size. In addition, larger groups should also have a larger probability to be the result of a coagulation process. The easiest way to implement this reasoning in terms of polynomial rates is given by
\begin{equation} \label{polynomialrates}
a(s, \hat s) = q(s + \hat s)^{\alpha}\,, \quad  b(s, \hat s) = p(s + \hat s)^{\beta -1} \,,
\end{equation}
where $\alpha, \beta \geq 0$, such that for $\alpha =\beta = 0$ we obtain the Niwa model with $a(s, \hat s) = q, \,  b(s, \hat s)(s + \hat s) = p$.
Inserting the rates from~\eqref{polynomialrates} into~\eqref{rates} gives
\begin{equation*}
K_{\rho(\cdot,t)}^c(\hat{s} \to s) =  N q s^{\alpha} \frac{\rho(s-\hat{s},t)}{s-\hat{s}} \chi_{[0,s)}(\hat{s}), \quad K^f(\hat{s} \to s) =  p s \hat{s}^{\beta-2} \chi_{[s,\infty)}(\hat{s})\,.
\end{equation*}
We use the jump algorithm from Section~\ref{algorithm} with rates~\eqref{polynomialrates} to approximate the equilibrium and compare the result to the equilibrium of the Niwa model, estimated by the series expansion. We set $p=q=2$, choose $\alpha=\beta = 0.1,1,3$ and use time step size $\rmd t = 0.01$ for $\alpha = \beta \leq 1$ and $\rmd t = 0.0001$ for $\alpha = \beta = 3$ to account for the higher jump rates. We observe in Figure~\ref{fig:polyrates} that with increasing $\alpha = \beta$ the distribution separates from the equilibrium profile with constant rates. While for $\alpha = \beta = 0.1$ the computed distribution coincides with $f_{\textnormal{eq}}$ and for $\alpha= \beta = 1$ the computed distribution is still very close to $f_{\textnormal{eq}}$, we note that for $\alpha=\beta = 3$ the group sizes are closer to a uniform distribution. Similarly to the situation with random rates, we have compared the simulation results at $T=20$ and at $T=40$ and observed the same behaviour of the size distribution, making sure that the stronger vicinity to the uniform distribution is not caused by a slower convergence process.

The finding that increasing $\alpha=\beta$ imply a divergence from the equilibrium profile with constant rates towards a uniform distribution can be accounted for by the fact that the rate of coagulation to large sizes $s$ increases with $\alpha$. This effect is apparently disproportionate to the impact of an increased rate of fragmentation from large sizes $s$ for increasing $\beta$. 
\begin{figure}[H]
\centering
\begin{subfigure}{.3\textwidth}
  \centering
  \includegraphics[width=1\linewidth]{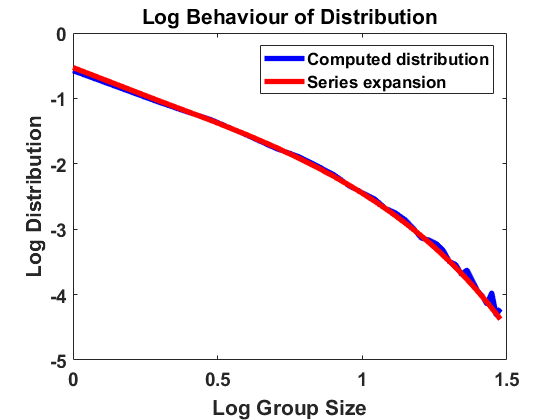}
  \caption{$\alpha = \beta =0.1$}
  \label{fig_polyn_01}
\end{subfigure}%
\begin{subfigure}{.3\textwidth}
  \centering
  \includegraphics[width=1\linewidth]{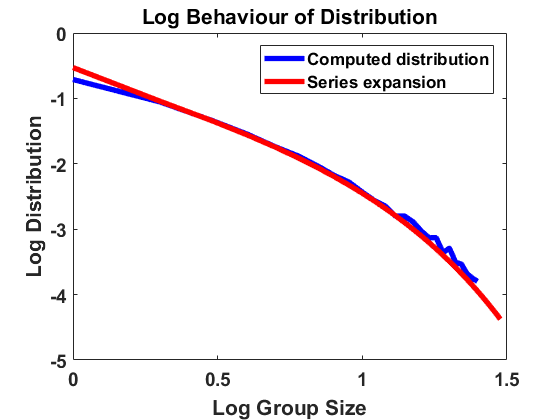}
  \caption{$\alpha = \beta = 1$}
  \label{fig_polyn_1}
\end{subfigure}%
\begin{subfigure}{.3\textwidth}
  \centering
  \includegraphics[width=1\linewidth]{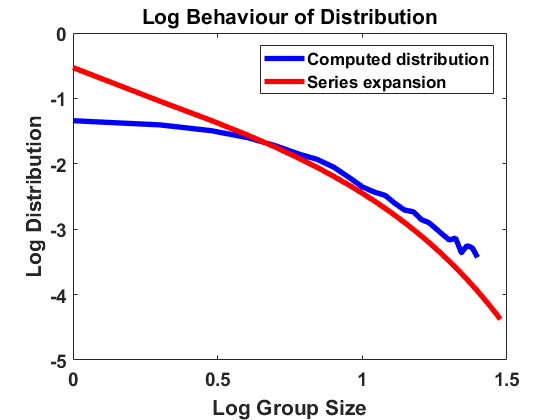}
  \caption{$\alpha = \beta =3$}
  \label{fig_polyn_3}
\end{subfigure}%
\caption{Niwa model with polynomial coagulation and fragmentation rates. For bin size $h=0.05$, $\tilde{N}=20000$ sample individuals and uniform initial distribution $\rho_0$, we compute the size distribution $\hat{f}(\cdot, T/\rmd t)$ up to time $T=20$ following the algorithm introduced in Section~\ref{algorithm}, with coagulation and fragmentation rates given as in \eqref{polynomialrates}, for $p=q=2$ and $\alpha = \beta = 0.1, 1, 3$. We average $\hat{f}(\cdot, T/\rmd t)$ over larger bin sizes $h_1 =1$ and display the interpolated plot together with $f_{50}$ as given in \eqref{expansion_terms} in a log-log scale.}
\label{fig:polyrates}
\end{figure}
Similarly to the previous section, we additionally consider the coagulation and fragmentation rates
\begin{equation} \label{polynomialrates2}
a(s, \hat s) = q(s + \hat s)^{\alpha}\,, \quad  b(s, \hat s) = p(s + \hat s)^{\beta} \,,
\end{equation}
where $\alpha, \beta \geq 0$. In this case, the situation for $\alpha =\beta = 0$ coincides with the Aizenman-Bak model \cite{AizenmanBak} where $ b(s, \hat s)= p$.

As before, for $p=q=2$, we use the jump algorithm to simulate the dynamics with rates~\eqref{polynomialrates2} and approximate the stationary density which we compare to the equilibrium of the Aizenman-Bak model $f_{\textnormal{eq}}^{AB}(s) = e^{-s}$ and the equilibrium of the Niwa model. Recall from Figure~\ref{fig:randomrates2} that the higher fragmentation rates in the Aizenman-Bak model lead to a shift of mass to smaller group sizes, compared with the Niwa equilibrium. We choose the same parameter values as for Figure~\ref{fig:polyrates} and observe in Figure~\ref{fig:polyrates2} that, similarly to the model with rates~\eqref{polynomialrates} as shown in Figure~\ref{fig:polyrates}, the equilibrium distribution seems to be driven away towards a uniform distribution under sufficiently increased exponents $\alpha=\beta$. Similarly to random rates with large variance, the increased coagulation and fragmentation probabilities of large sizes apparently tend to balancing each other out as opposed to the case with constant rates.
\begin{figure}[H]
\centering
\begin{subfigure}{.3\textwidth}
  \centering
  \includegraphics[width=1\linewidth]{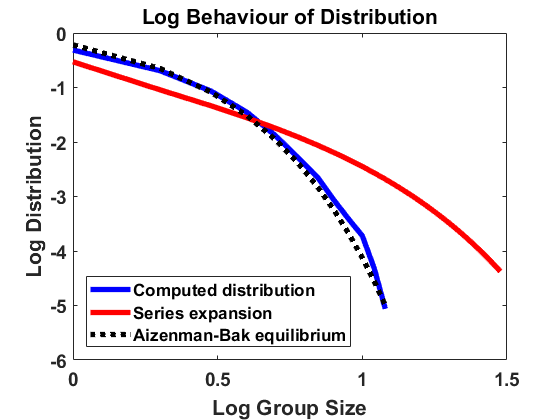}
  \caption{$\alpha = \beta =0.1$}
  \label{fig_polyn_05}
\end{subfigure}%
\begin{subfigure}{.3\textwidth}
  \centering
  \includegraphics[width=1\linewidth]{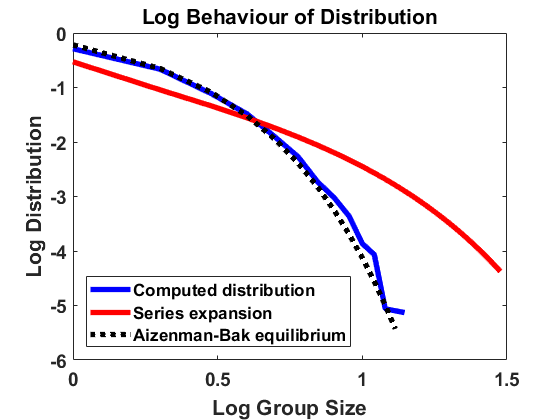}
  \caption{$\alpha = \beta = 1$}
  \label{fig_polyn_1_2}
\end{subfigure}
\begin{subfigure}{.3\textwidth}
  \centering
  \includegraphics[width=1\linewidth]{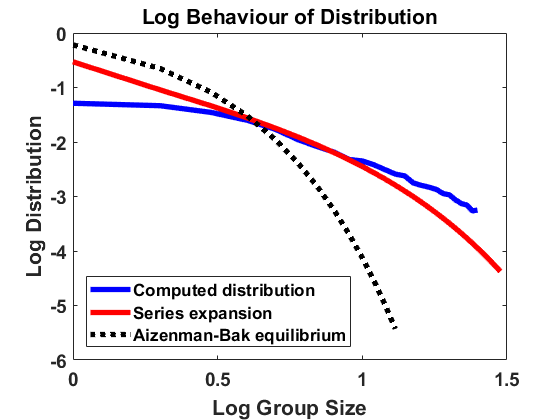}
  \caption{$\alpha = \beta =3$}
  \label{fig_polyn_2}
\end{subfigure}%
\caption{Aizenman-Bak model with polynomial coagulation and fragmentation rates. Given the same parameters as in Figure~\ref{fig:polyrates}, the results of the analogous simulation, following the algorithm with coagulation and fragmentation rates given as in~\eqref{polynomialrates2}, are displayed for $p=q=2$ and $\alpha = \beta = 0,1,3$ and compared to the expansion $f_{50}$ from the Niwa model and the stationary density $f_{\textnormal{eq}}^{AB}(s) = e^{-s}$ of the Aizenman-Bak model.}
\label{fig:polyrates2}
\end{figure}

\section{Statistical analysis of the jump process} \label{sec:statistics}
The stochastic algorithm introduced in Section~\ref{algorithm} can be used to study statistical properties of the jump process $ (X_t)_{t \geq 0}$ from Section \ref{constr_jumpprocess}. In the following, we estimate the decay of correlations for the process in equilibrium and starting out of equilibrium. Furthermore, we approximate the typical occupation times of individuals at cluster sizes for the different types of coagulation and fragmentation rates used in the previous chapter.
\subsection{Autocorrelation function}
The autocorrelation function gives an essential characterization of a stochastic process $ (X_t)_{t \geq 0}$, by measuring the amount of memory the process keeps over times $t-s$. In more detail, let $\mu_t := \mathbb{E}[X_t]$ denote the expected value and $\sigma_t^2 := \mathbb{E}[(X_t- \mu_t)^2]$ denote the variance  of the process at time $t \geq 0$. Then the autocorrelation function is given by
$$\tilde{A}(t,s) =  \frac{\mathbb{E}[ (X_t - \mu_t) (X_s - \mu_s)]}{\sigma_t \sigma_s}\,.$$
Fixing a time $t\geq 0$,  we define the autocorrelation function in one time variable by
$$A(\tau) = \tilde{A}(t, t + \tau).$$
We investigate numerically the autocorrelation function of the process $(X_t)_{t \geq 0}$, as in Section \ref{constr_jumpprocess}, for the Niwa model \eqref{eq:jumpproc_strong} with coagulation and fragmentation parameters $\tilde{p}= \tilde{q}=1$, i.e. $p=q=2$.
We use the jump algorithm described in Section~\ref{algorithm} to make a numerical estimate on the behaviour of $A(\tau)$, given that (a) the stationary density $\rho_{\textnormal{eq}}$ is reached, i.e.~$ \rho_0 = \rho_{\textnormal{eq}}$ (see Remark \ref{rem:equ}), and (b) the equilibrium density is not reached yet but starting from a uniform distribution $\rho_0$. In Fig.~\ref{fig:Autocor}, we observe a rapid decrease of the autocorrelation $A(\tau)$, i.e. fast decay of correlations, in both cases. The findings suggest an exponential decay of correlations with a rate close to $0.25$ in equilibrium and a rate close to $0.3$ starting from a uniform distribution.

The numerical results indicate that, in the Niwa model, the size of the group an individual belongs to at a certain point of time is correlated significantly only to the sizes of the groups the individual belonged to in the close past. This is consistent with the underlying assumption that groups of all sizes can be involved in particular coagulations and fragmentations with equal probability.

\begin{figure}[H]
\begin{subfigure}{.35\textwidth}
  \centering
  \includegraphics[width=1\linewidth]{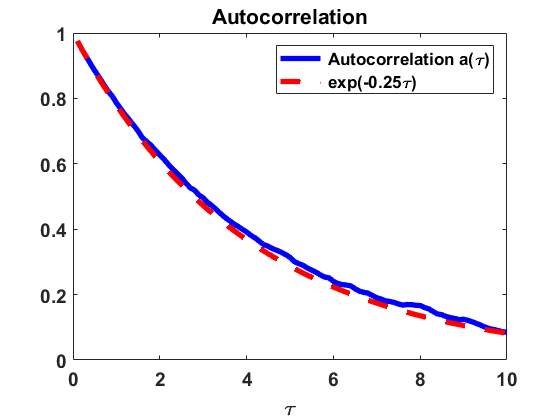}
  \caption{In equilibrium}
  \label{fig_autocor1}
\end{subfigure}
\begin{subfigure}{.35\textwidth}
  \centering
  \includegraphics[width=1\linewidth]{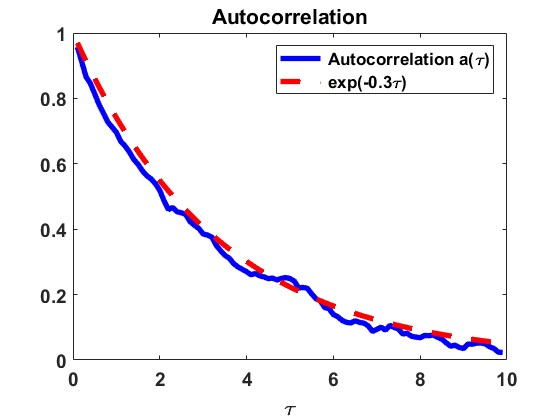}
  \caption{Out of equilibrium}
  \label{fig_autocor2}
\end{subfigure}
\centering
\caption{Autocorrelation functions in the Niwa model. We estimate $A(\tau)$ for $\tau \in (0,10]$ following $10^5$ paths. In (a), the paths are distributed according to the stationary population density $\rho_{\textnormal{eq}}$, evolving according to the algorithm introduced in Section~\ref{algorithm} (Remark \ref{rem:equ}), with $\rho_{\textnormal{eq}}$ estimated by $f_{50}$~\eqref{expansion_terms} via~\eqref{PopDistr}. In (b), the paths are changing the distribution in approach of the equilibrium distribution according to~\eqref{densityestimate}. We observe exponential decay of correlations in both cases.}
\label{fig:Autocor}
\end{figure}

\subsection{Statistics of the occupation time}
The second numerical investigation of the process' statistical properties concerns the time individuals spend in average at a given cluster size before performing a jump to a new cluster size. 
We call this time length the average occupation time of each group size. 
We approximate this magnitude by conducting the algorithm from Section~\ref{algorithm} with the four different types of coagulation and fragmentation rates, presented in Section \ref{nuumval}:
constant rates $\tilde{p}= \tilde{q} =1$ ($p=q=2$), as in the original Niwa model, in equilibrium (Figure~\ref{fig:Average_occup_const}~(a)) and out of equilibrium (Figure~\ref{fig:Average_occup_const}~(b)), random rates as given in~\eqref{randomrates} (Figure~\ref{fig:Average_occup}~(a)-(c)) and  polynomial rates as in~\eqref{polynomialrates} (Figure~\ref{fig:Average_occup}~(d)-(f)). For the random rates we compare $\sigma =1,2,3$ and for the polynomial rates $\alpha= \beta = 1,2,3$.
\begin{figure}[H]
\centering
  \begin{subfigure}{.35\textwidth}
  \centering
  \includegraphics[width=1\linewidth]{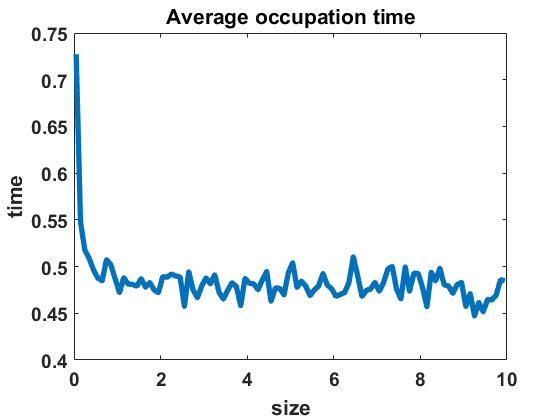}
  \caption{In equilibrium}
  \label{occup_equ}
\end{subfigure}
\begin{subfigure}{.35\textwidth}
  \centering
  \includegraphics[width=1\linewidth]{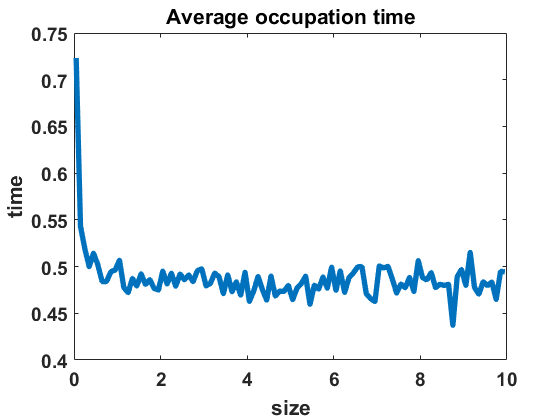}
  \caption{Out of equilibrium}
  \label{occup_nonequ}
\end{subfigure}
\caption{Statistics of average occupation time for the Niwa model in equilibrium (left) and out of equilibrium (right). For $p=q =2$, we deploy the algorithm, introduced in Section~\ref{algorithm}, to simulate $10^3$ paths of the jump process $ (X_t)_{t \geq 0}$ until time $T=10^5$, using a time step size of $ \rmd t = 0.1$. The average occupation time of individuals at each cluster size is measured for coagulation and fragmentation rates~\eqref{DegondrateC}. The density $\rho_{\textnormal{eq}}$ for the equilibrium case (a) is estimated by the expansion~\eqref{expansion_terms} with $K=50$ and the calculations for (b) are conducted with uniform initial distribution.}
\label{fig:Average_occup_const}
\end{figure}
\begin{figure}[H]
\centering
\begin{subfigure}{.3\textwidth}
  \centering
  \includegraphics[width=1\linewidth]{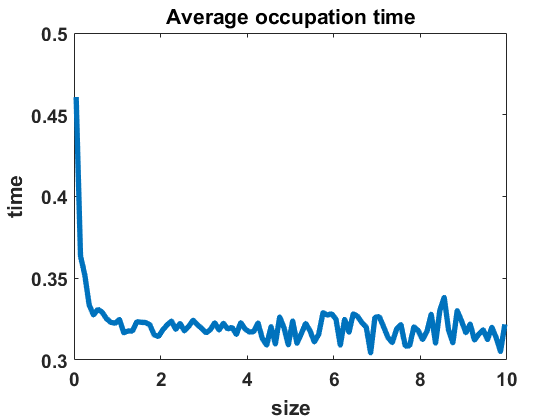}
  \caption{$\sigma = 1$}
  \label{occup_random1}
\end{subfigure}%
\begin{subfigure}{.3\textwidth}
  \centering
  \includegraphics[width=1\linewidth]{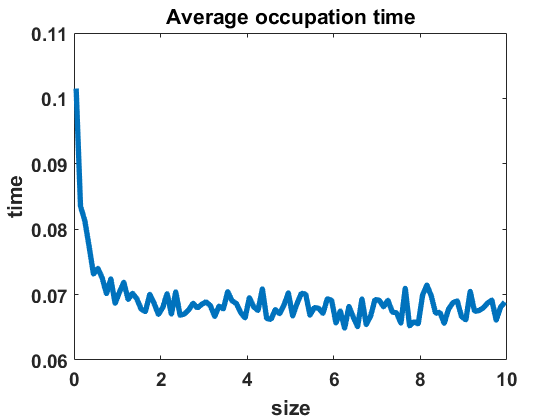}
  \caption{$\sigma =2$}
  \label{occup_random2}
\end{subfigure}%
\begin{subfigure}{.3\textwidth}
  \centering
  \includegraphics[width=1\linewidth]{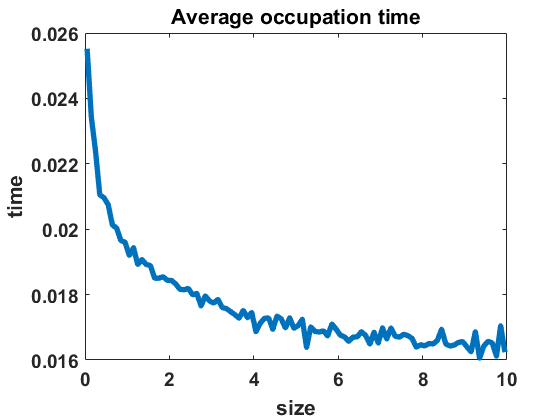}
  \caption{$\sigma = 3$}
  \label{occup_random3}
\end{subfigure}%
\hfill
\begin{subfigure}{.3\textwidth}
  \centering
  \includegraphics[width=1\linewidth]{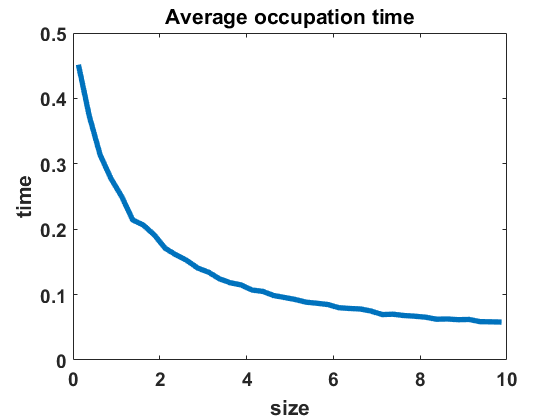}
  \caption{$\alpha = \beta =1$}
  \label{occup_polyn1}
\end{subfigure}%
\begin{subfigure}{.3\textwidth}
  \centering
  \includegraphics[width=1\linewidth]{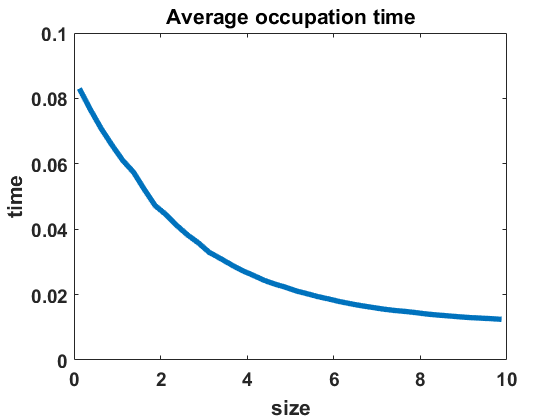}
  \caption{$\alpha = \beta =2$}
  \label{occup_polyn2}
\end{subfigure}%
\begin{subfigure}{.3\textwidth}
  \centering
  \includegraphics[width=1\linewidth]{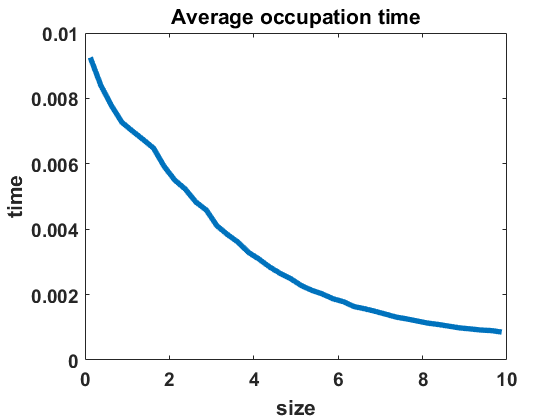}
  \caption{$\alpha = \beta =3$}
  \label{occup_polyn3}
\end{subfigure}
\caption{Statistics of average occupation time for the Niwa model with fluctuating coefficients (top) and with polynomial rates (bottom). Starting from a uniform initial distribution, the same simulations as in Figure~\ref{fig:Average_occup_const} are conducted for random rates~\eqref{randomrates} (with time step size $\rmd t = 0.01$) and polynomial rates~\eqref{polynomialrates} ($ \rmd t = 0.0005$), again measuring the average occupation time of individuals at each cluster size. The average occupation times can be seen to decrease strongly for increasing standard deviation $\sigma$ in the case of random rates ((a)-(c)) and increasing polynomial exponents $\alpha = \beta$ in the situation of polynomial rates ((d)-(f)), due to the increased jump rates.}
\label{fig:Average_occup}
\end{figure}
In Figure~\ref{fig:Average_occup_const} and Figure~\ref{fig:Average_occup}, we display the approximated occupation times, averaging over bin sizes $h_1 = 0.1$, for different choices of coagulation and fragmentation rates. We observe that the occupation times roughly reflect the corresponding equilibrium size distributions (or almost steady size distributions in the random case respectively), as seen in Section~\ref{nuumval}. In the case of coagulation and fragmentation rates~\eqref{DegondrateC} from the Niwa model, as shown in Figure~\ref{fig:Average_occup_const}, the occupation times show a sharp increase for smaller group sizes. We observe similar behaviour for small perturbations of the model, represented by the random case with $\sigma =1$ (Figure~\ref{fig:Average_occup}~(a)) and the polynomial case with $\alpha = \beta =1$ (Figure~\ref{fig:Average_occup}~(d)). Note that for random rates with $\sigma = 1$ and constant rates in equilibrium and out of equilibrium, the decay of occupation times stops at small sizes such that the occupation times fluctuate around a constant level for all larger sizes. In the case of polynomial rates, we observe smooth decay of occupation times mirroring the group size distribution in equilibrium more accurately.

The larger the random or polynomial perturbations of the Niwa model become, the more we see the peak at smaller sizes vanish. In fact, the average occupation times can be seen to decrease to a much smaller scale for increasing standard deviation $\sigma$ in the case of random rates (Figure~\ref{fig:Average_occup}~(a)-(c)) and increasing polynomial exponents $\alpha = \beta$ in the situation of polynomial rates (Figure~\ref{fig:Average_occup}~(d)-(f)), which can be explained by the increased coagulation and fragmentation rates and, thereby, increased jump rates. The decrease of occupation times to a similarly low level for all sizes is in accordance with the equilibrium density tending to a more uniform distribution for increasing random rates and increasing polynomial exponents, as seen in Figures~\ref{fig:randomrates} and~\ref{fig:polyrates}.

\section{SDE approximation to the jump-process model} \label{sec:SDEs}

Motivated by Niwa's approach to use a stochastic differential equation (SDE)
for modeling the dynamics and equilibrium distribution of group sizes \cite{N}, 
we discuss the role of SDEs for modeling the merging-splitting dynamics that
correspond to the coagulation-fragmentation equation~\eqref{weakgeneral} 
or~\eqref{eq:jumpproc_stronggen}. 
First, we discuss Niwa's SDE model and describe some of its problematic aspects. 
Next we derive a natural diffusion approximation to the 
group-size jump process of section \ref{derivatsp},
and demonstrate the inconsistency of this approach for modeling the jump process.
Finally, we discuss an alternative SDE model for the dynamics of group 
sizes --- a stochastic logistic equation --- which, 
although it involves very different mechanisms for group-size changes,
yields equilibrium group-size distributions that also have
the form of a power-law with an exponential cutoff (gamma distribution).

\subsection{Niwa's SDE model} \label{sec:NiwaSDE}
In order to find an expression for the equilibrium group-size distribution,
Niwa \cite{N} models the process $(X_t)_{t \geq 0}$ 
of the size of the group containing a given individual 
via an SDE having the form
\begin{equation} \label{NiwaSDE1}
\rmd X_t = - \frac{\tilde p}{2}(X_t - \bar x) \, \rmd t + \sigma(X_t)\, \rmd W_t\,,
\end{equation}
when $X_t>0$,
where the parameter $\tilde p$ is related to the rate of group splitting per time step,
the constant $\bar x=\ip{X_t}_p$ represents the population-weighted mean group size,
and $W_t$ denotes standard Brownian motion. 
The drift is chosen linearly around the average, which roughly models
the notion that, on average, 
fragmentation decreases group size by half, 
while coagulation increases it by a constant.
Niwa modeled the noise coefficient $\sigma(X_t)$ using data from site-based 
merging-splitting simulations, coming to the conclusion that 
\begin{equation}\label{Niwa-noise}
\sigma(x)^2 = 2D \exp \left(\frac{x}{\bar x}\right)\,,
\end{equation}
where $D$ is a constant. 
Ultimately though, there is no rigorous, or even formal, derivation of \eqref{NiwaSDE1}.

 The stationary Fokker-Planck equation associated with 
the SDE~\eqref{NiwaSDE1} states that $\rmd J/\rmd x=0$ where $J$ is the probability flux
\begin{equation}\label{d:Jflux}
J(x) = -\frac {\tilde p} {2} (x-\bar x)\rho(x) - 
\frac{\rmd}{\rmd x}\left( D \exp\left(\frac{x}{\bar x}\right) \rho(x)\right) \,.
\end{equation}
Taking $J\equiv 0$ to ensure there is no flux at $\infty$, one can solve this equation
to find that  the equilibrium population distribution takes the form
\begin{equation} \label{Niwaequil1}
\rho(x) = \frac1Z \exp \left[ - \frac{x}{\bar x} 
\left( 1- \gamma e^{-x/{\bar x}} \right) \right]\,,
\quad \gamma = \frac{\tilde p \bar x^2}{2D} \,,
\end{equation}
where $Z>0$ is a normalization constant. 
Correspondingly, the stationary group-size distribution is given as
\begin{equation} \label{Niwaequil}
\Phi(x) =  \frac{\rho(x)}x = \frac1{xZ} \exp \left[ - \frac{x}{\bar x} 
\left( 1-\gamma  e^{-x/\bar x} \right) \right]\,.
\end{equation} 

One problem in using \eqref{NiwaSDE1} to model group-size evolution is
that the process $(X_t)$ will hit $0$, and one needs to specify how 
group size will be kept positive.  Niwa appears to model this using
symmetrization after a change of variables, and is led to impose the condition
\begin{equation}\label{Niwa-fdrel}
D = \tilde p \bar x^2 \,,
\end{equation}
corresponding to $\gamma=\frac12$ 
(apparently in order to make a symmetrized drift potential continuously
differentiable at 0). It seems more natural mathematically, instead,
to simply require the stochastic process $X_t$ to reflect at 0. 
As described in \cite{McKean,Tanaka}, e.g., this means that 
a term $\rmd L_t$ is added to the right-hand side of \eqref{NiwaSDE1}, 
where $L_t$ is the local time of the process $X_t$ at 0, 
determined by the formula
\begin{equation}\label{d:localtime}
L_t = \lim_{\delta\to0^+} \frac1{2\delta} \int_0^t \one_{\{X_s<\delta\}}\,\rmd s\,.
\end{equation}  
The equilibrium density of this reflected process still has the form in
\eqref{Niwaequil1} with $J\equiv 0$, 
with normalization constant $Z$ simply chosen to make
$\rho$ a probability density on $(0,\infty)$.

This leaves $\gamma$ as a free parameter in the model, which
one ought to specify in some further way. 
In terms of the quality of fitting \eqref{Niwaequil} to the empirical data shown in 
\cite[Fig. 5]{N}, it does not matter much what the precise value of $\gamma$ 
is, as long as it is small. On the scale of \cite[Fig. 5]{N}, the value $\gamma=0$ 
provides a very acceptable fit, as was mentioned by Niwa himself
\cite{MJS} and was shown in \cite[Fig. 2]{DLP}.
The simulation data Niwa generated in \cite[Fig. 2]{N} 
seem to be consistent with a much larger value of $\gamma$, however,
that would not lead to a good fit with the data of \cite[Fig. 5]{N}.

Since the stationary distribution \eqref{Niwaequil} reasonably 
fits empirical data, one can consider whether the SDE \eqref{NiwaSDE1}
is a suitable basis for numerical simulation of the individual group-size 
process. We perform simulations 
using a simple Monte-Carlo scheme for an Euler-Maruyama integration  of~\eqref{NiwaSDE1} with reflection and with step size $\rmd t=10^{-4}$, 
taking $\tilde p=1=\bar x$ and imposing \eqref{Niwa-fdrel}.
(This means that the corresponding $\hat f_{\textnormal{eq}}$ with $\bar x=1$, $N=1$ has to be rescaled as $\hat f_{\textnormal{eq}} = 36 f_{\textnormal{eq}} (6x)$, see \cite[Remark 5.1]{DLP}.)   
Following one trajectory up to time length $T=10^6$, we approximate the stationary size distribution~\eqref{Niwaequil} on $(0, \infty)$. 
In Fig.~\ref{fig:NiwaSDE}(a), we compare the result of the simulation, the density~\eqref{Niwaequil} and the rigorously derived equilibrium $\hat f_{\textnormal{eq}}$ with $\bar x = 1$, $N=1$, estimated by the rescaled expansion $f_{50}$.  
We observe that for small group sizes the approximation is relatively close to the other two densities, but for larger group sizes trajectories are lost although the time step size is already extremely small. This has to do with the highly unstable diffusion coefficient which is an exponential function. We actually can't compute the distribution for sizes $ x > 10$ due to the unstable diffusion coefficient.

In order to avoid the exponentially unstable diffusion coefficient, we apply the following change of variables. Recall we take $\tilde p=1=\bar x$, $\gamma=\frac12$. Similarly to Niwa \cite{N}, for $X_t>0$ we introduce $ Y_t \in (0,1)$ by
\begin{equation} \label{changeofvar}
Y_t =  1 - \exp(-X_t/2) \,.
\end{equation}
We formally use It\^{o}'s formula to obtain, for $Y_t>0$,
\begin{equation} \label{changedSDE}
\rmd Y_t  = \frac14 \left[ ( 2 \ln( 1-  Y_t ) + 1 ) (1- Y_t)  
- \frac{1}{ 1- Y_t } \right] \rmd t 
 + \frac{1}{\sqrt{2}} \rmd W_t\,.
\end{equation}
Again we require the process $Y_t$ to be reflected at 0, so this equation should be modified by a local time term.  
The strong negative drift near 1 prevents the exact process $Y_t$ from
hitting 1 (as one can check using the criterion from \cite[Theorem 3]{Feller1954}, see Section~\ref{ss:logistic} below).
In a Monte-Carlo simulation, however, we have to prevent trajectories from leaving the domain at 1, by simply letting them stay at the same position in case the absolute value would become larger than $1$. This Monte-Carlo algorithm, based on~\eqref{changedSDE}, yields Fig.~\ref{fig:NiwaSDE}~(b), where again the result of the simulation (with $\rmd t = 10^{-3}$, $T=10^{6}$) is compared to the stationary density~\eqref{Niwaequil} and the rigorously derived equilibrium $\hat f_{\textnormal{eq}}$ with $\bar x =1$, $N= 1$, estimated by the rescaled expansion $f_{50}$. We observe that the distribution obtained by the simulation lies close to both densities but does not coincide with either of them which can be seen in particular for larger sizes.

\begin{figure}[H]
\centering
\begin{subfigure}{.4\textwidth}
  \centering
  \includegraphics[width=1\linewidth]{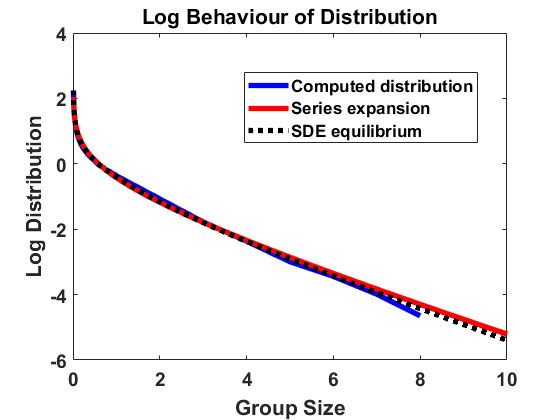}
  \caption{Simulation of~\eqref{NiwaSDE1}}
  \label{fig_NiwaSDE4}
\end{subfigure}%
\begin{subfigure}{.4\textwidth}
  \centering
  \includegraphics[width=1\linewidth]{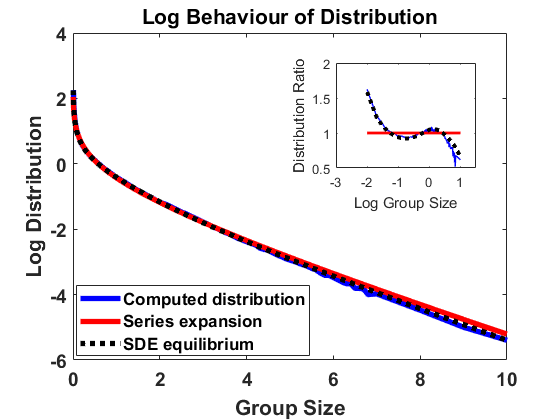}
  \caption{Simulation of~\eqref{changedSDE}}
  \label{fig_Niwa_cov_2}
\end{subfigure}
\caption{ Semi-log plot of group-size distribution of trajectories 
of (a) Niwa's SDE \eqref{NiwaSDE1} and (b) the transformed SDE~\eqref{changedSDE}, 
obtained by Euler-Maruyama integration 
(using time length $T=10^6$ and step size $\rmd t=10^{-4}$ in (a), $\rmd t = 10^{-3}$ in (b)),
compared to the equilibrium density~\eqref{Niwaequil} and the equilibrium $\hat f_{\textnormal{eq}}$ for model~\eqref{eq:CF1_Niwa} with $\bar x = 1$, $N= 1$, estimated by the rescaled expansion $f_{50}$. 
In (b), \emph{inset} log-linear plot of ratio between the respective distributions and $f_{50}$, similarly to \cite[Figure 2]{DLP} where the normailzation factor $1/Z= 0.881237$ for the Niwa SDE equilibrium was not taken into account.}
\label{fig:NiwaSDE}
\end{figure} 

Summarizing, modeling the dynamics via~\eqref{NiwaSDE1} or~\eqref{changedSDE} lacks rigorous justification. The uniformly elliptic noise pushes the process to hit the origin and one must invoke {\it ad hoc} a means to keep it positive,   
unjustified
in terms of the underlying population dynamics as originally outlined by Niwa. 

Even though the simulated processes seem to reach an equilibrium close to the analytical prediction (see Figure~\ref{fig:NiwaSDE}), another serious modeling issue is that SDE sample paths are always continuous in time, and do not make large jumps in the way the merging/splitting mechanism would suggest.  It is not clear whether the solution process of such an SDE can be related to the Markov jump process which is derived in Section~\ref{derivatsp} and simulated successfully in Section~\ref{nuumval}. The next section explores the possibility of such a connection.

\subsection{SDE and the jump process} \label{SDEjump}
In the following, we investigate the suitability of a natural drift-diffusion approximation to 
the jump process constructed in Section~\ref{constr_jumpprocess}, in the situation of the Niwa model with coagulation and fragmentation factors~\eqref{DegondrateC}.
Recall from~\eqref{generator} the family of generators $(A_t)_{t \geq0}$
\[
(A_t f) (x)= \lambda(t,x) \int (f(y)-f(x))\mu_t(x, \rmd y)\,,
\]
where $\mu_t$ and $\lambda(t,x)$ are given by~\eqref{transition_prob}.
Writing, similarly to before, $\mu_t(x, dy) = \hat \mu_t(x,y) d y$, the forward equation for the jump process is the Fokker-Planck equation
\[
\partial_t \rho(x,t) = A_t^*\rho = \int \lambda(t,y) \hat \mu_t(y,x) \rho(y,t)\,\rmd y - \lambda(t,x)\rho(x,t)\,,
\]
as given in \eqref{forwardequation}. Matching \cite{DLP}, as before, we take $N=1$ and $p=q=2$ and assume that we are in equilibrium, i.e.~$\rho(x,t) = \rho_{\textnormal{eq}}(x)$ for all $t \geq0$, such that $\lambda(x)= \lambda_{\rho_{\textnormal{eq}}(\cdot)}(x) = \lambda(t,x)$, $\hat \mu(y,x) = \hat \mu_t(y,x)$ and $A = A_t$ are time-independent.
In equilibrium, we observe that
\[
\lambda(y) \hat \mu(y,x) = K_{\rho_{\textnormal{eq}}(\cdot)}(y\to x)
= 2 f_{\textnormal{eq}}(x-y)\one_{x>y} + 2\frac{x}{y^2}\one_{x<y} \ .
\]
Using the moment relations in \cite[Eq. (5.6)]{DLP} yields
\[
\lambda(y) = \int_0^\infty K_{\rho_{\textnormal{eq}}(\cdot)}(y\to x)\,\rmd x = 2 m_0(f_{\textnormal{eq}})+1 = 3\,,
\]
where $m_k(f):= \int_{\mathbb{R}_+} x^kf(x) \, \rmd x$.
This means constant event rates, which is consistent with the fact that the system is in equilibrium.  We obtain
\begin{equation} \label{muequil}
\hat \mu(y,x) = \frac23\left( f_{\textnormal{eq}}(x-y)\one_{x>y} + \frac{x}{y^2}\one_{x<y}\right) \ .
\end{equation}
Now, supposing that the jumps go typically to a close range of sizes, we use the Taylor approximation 
\[
f(y) = f(x) + f'(x)(y-x) + \frac{f''(x)}{2} (y-x)^2 + o((y-x)^2),
\]
to get
$A f \approx A_{\textnormal{dd}} f$, where the drift-diffusion approximation $A_{\textnormal{dd}}$ to the 
jump-process generator $A$ is given by 
\begin{equation} \label{Taylorgenerator}
(A_{\textnormal{dd}} f)(x) = b(x)f'(x) + \frac12 c(x) f''(x)\,,
\end{equation}
with
\[ 
b(x) = \lambda(x)\int (y-x) \hat \mu(x,y) \, \rmd y,\qquad
c(x) = \lambda(x)\int (y-x)^2 \hat \mu(x,y) \, \rmd y.
\]
Note that $b$ represents the drift and $c$ the diffusion coefficient.
We can conclude from~\eqref{muequil} that in equilibrium these coefficients are as follows: using $m_1(f_{\textnormal{eq}})=N=1$, the drift is given as
\begin{align} \label{driftcoeff}
b(y) &= 2\int_0^\infty(x-y)f_{\textnormal{eq}}(x-y)\one_{x>y}\,\rmd x + \frac2{y^2} \int_0^y (x-y)x\,\rmd x \nonumber
\\ & = 2+ 2y \left(\frac13-\frac12\right) = 2\left(1-\frac y6\right).
\end{align}
Note that the signs are consistent with the model since the drift pushes to the right at small $y$ and the left at large $y$, as expected.
Using that $m_2(f_{\textnormal{eq}})=6$ from \cite[Eq. (5.6)]{DLP}, the diffusion coefficient reads
\begin{align} \label{diffusioncoeff}
c(y)&= 2\int_0^\infty(x-y)^2f_{\textnormal{eq}}(x-y)\one_{x>y}\,\rmd x + \frac2{y^2} \int_0^y (y-x)^2x\,\rmd x \nonumber
\\ & = 12+ 2y^2 \left(\frac13-\frac14\right) = 12 + \frac16 y^2.
\end{align}
For the SDE with drift $b$ and diffusion $c$, the Fokker-Planck equation corresponding with~\eqref{Taylorgenerator} is
\[
\partial_t \rho + (b\rho)_x = \frac12 (c\rho)_{xx}.
\]
The stationary solution $\rho_{\textnormal{dd}}$ of this equation satisfies
\[
\frac{\rho_{\textnormal{dd}}'}{\rho_{\textnormal{dd}}}
= \frac{2b}c -\frac{c'}{c}
= \frac{6(4-x)}{72+x^2} .
\]
After integration, we can determine
\begin{equation}
\rho_{\textnormal{dd}}(x) = 
\rho_{\textnormal{dd}}(0)
\frac{72^3 A(x)}{(72+x^2)^3}\, ,
\qquad A(x) = \exp (2\sqrt 2 \tan^{-1} (x/\sqrt{72})).
\end{equation}

The stationary density $\rho_{\textnormal{dd}}$ for the approximating SDE with generator~\eqref{Taylorgenerator} 
differs rather substantially from 
the stationary solution of~\eqref{eq:jumpproc_strong} for $p=q=2$, $N=1$ which is given by $\rho_{\textnormal{eq}} = \rho_*$~\eqref{popequprof}.  
See the comparison of group size distributions in Figure~\ref{fig_rhodd}, 
and note that each tick mark on the vertical scale corresponds to 2 orders of magnitude.
For small group size, we have $\rho_{\textnormal{dd}}(x)\sim$ const,
while $\rho_*(x)\propto x^{1/3}$ from \eqref{popequprof}. 
Furthermore, while $\rho_*(x)$ decays exponentially, $\rho_{\textnormal{dd}}$ decays only algebraically fast with $\rho_{\textnormal{dd}}(x)\propto x^{-6}$ as $x\to\infty$. 
One trouble is that for large sizes, 
the drift and diffusion rates are dominated by the (uniform) fragmentation mechanism,
which is not well-described by small jumps. 

Recall that Niwa estimates the diffusion coefficient for the SDE~\eqref{NiwaSDE1} by fitting it into a semi-log plot of the variance of size changes in finite time intervals, based on data obtained from site-based simulations of merging and splitting \cite[Figure 2]{N}. 
We can deploy the algorithm for the Markov jump process in equilibrium to approximate the variance of size changes and compare the computations to the diffusion coefficient $c$. In Figure~\ref{fig_var_2b} we observe that size changes exhibited by the simulated jump process differ noticeably from the function $c$ (scaled by $\rmd t=0.05$ as appropriate), 
but not by a large percentage.  By fitting on a semi-log scale as indicated in Figure~\ref{fig_var_3_log}
we find a good fit with a similar exponential form as Niwa had, namely with
\begin{equation}\label{eq:fit_c1}
c_1(y) = \exp(2.19+0.1y).
\end{equation}

\begin{figure}[H]
\centering
\begin{subfigure}{.4\textwidth}
  \centering
  \includegraphics[width=1\linewidth]{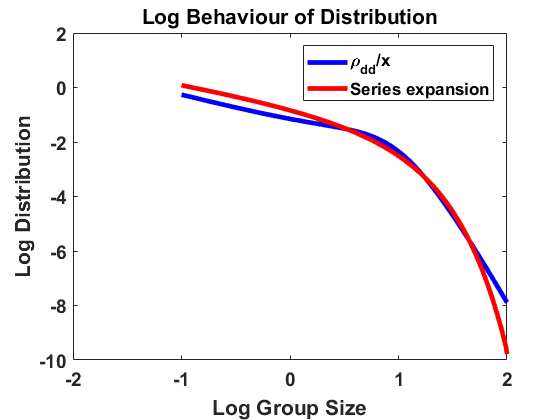}
  \caption{Comparison of $\rho_{\textnormal{dd}}(x)/x$ with $f_{\textnormal{eq}}(x)$ }
  \label{fig_rhodd}
\end{subfigure}%
\begin{subfigure}{.4\textwidth}
  \centering
  \includegraphics[width=1\linewidth]{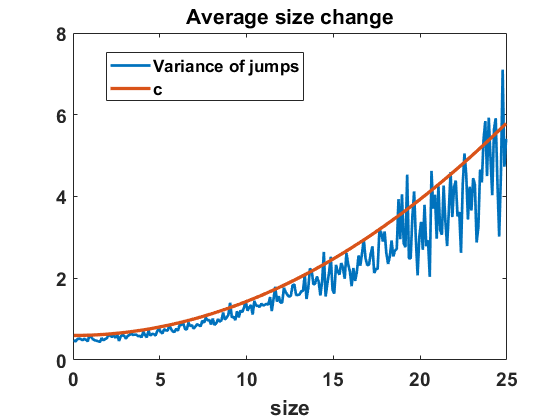}
  \caption{Variance comparison to $c$}
  \label{fig_var_2b}
\end{subfigure}
\caption{Validity of the SDE approximation with drift $b$ \eqref{driftcoeff} and diffusion coefficient $c$ \eqref{diffusioncoeff}. For mass $N =1$ and $p=q=2$: (a) We compare the equilibrium group size distributions $f_{\textnormal{dd}}(x)=\rho_{\textnormal{dd}}(x)/x$ and $f_{\textnormal{eq}}$; (b) We simulate the jump process in equilibrium (see Section~\ref{algorithm}) and estimate the variance of size change by averaging along trajectories with time increment $\rmd t=0.05$. We compare the computations to the diffusion coefficient $c$ (scaled by $\rmd t$), as calculated from the second order approximation~\eqref{diffusioncoeff}.}
\label{fig:variance}
\end{figure}

\begin{figure}[H]
\centering
\begin{subfigure}{.4\textwidth}
  \centering
  \includegraphics[width=1\linewidth]{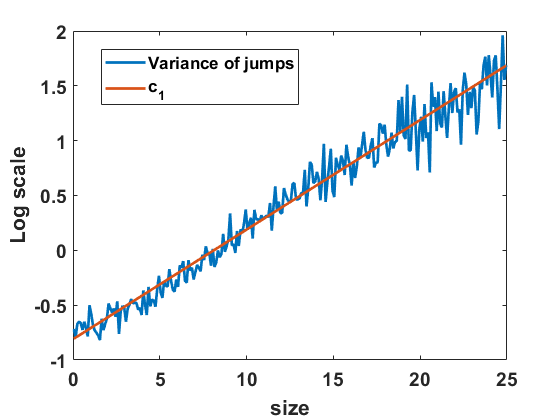}
  \caption{Variance comparison to $c_1$, semi-log plot}
  \label{fig_var_3_log}
\end{subfigure}
\begin{subfigure}{.4\textwidth}
  \centering
  \includegraphics[width=1\linewidth]{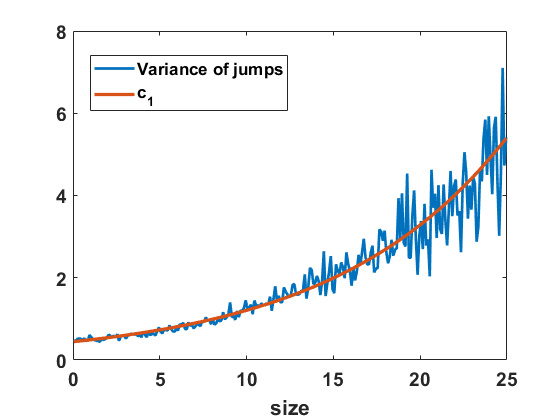}
  \caption{Variance comparison to $c_1$, linear plot}
  \label{fig_var_3}
\end{subfigure}%
\caption{Fitting of exponential diffusion coefficient \eqref{eq:fit_c1}. For mass $N =1$ and $p=q=2$, we simulate the jump process in equilibrium (see Section~\ref{algorithm}) and estimate the variance of size change by averaging along trajectories. We compare the computations to the fitted coefficient $c_1$
in \eqref{eq:fit_c1}.}
\label{fig:variance_c1}
\end{figure}

\subsection{Model with degenerate noise}
So far, we have ascertained that the adequate stochastic method for studying the coagulation-fragmentation model~\eqref{weakgeneral} with non-local rates in general and the Niwa model~\eqref{eq:CF1_Niwa} in particular is given by a jump process, as derived in Section~\ref{derivatsp}. 
Finding a stochastic differential equation whose solution is closely related to the underlying jump process has turned out to be analytically (Section~\ref{SDEjump}) and numerically (Section~\ref{sec:NiwaSDE}) cumbersome. 
However, one can still try to find an SDE which models coagulation and fragmentation dynamics differently and displays the same or a similar equilibrium distribution as the evolution equation~\eqref{forwardequation}. In the following, we consider a \emph{stochastic logistic equation} and its relation to a nearest-neighbour random walk.

\subsubsection{Stochastic logistic equation and gamma distribution} \label{ss:logistic}
From data plotted in \cite[Figure 2]{DLP} and \cite{MJS}, 
one can see that both
the equilibrium profile~\eqref{expansion} for the coagulation and fragmentation model~\eqref{eq:CF1_Niwa} and 
the equilibrium profile~\eqref{Niwaequil} for Niwa's SDE~\eqref{NiwaSDE1} model with $ \bar x = 1 $ and  $\gamma=\frac12$ 
are close to the simple logarithmic size distribution profile
$$ {\Phi}(x)= x^{-1} \exp(-x), $$ 
in the range containing most of the empirical data plotted in
\cite[Figure 5]{N}.
Hence, we can also try to find an SDE, derived from a coagulation-fragmentation model for particles, such that the population distribution
\begin{equation}\label{expdistr}
\rho(x) =  \exp(-x) 
\end{equation}
is the stationary solution of the corresponding Fokker-Planck equation. 
Pursuing this objective, we consider the stochastic logistic equation 
\begin{equation} \label{Stochasticlogistic}
\rmd X_t = rX_t\left(1-\frac {X_t}k\right)\rmd t + \sqrt2 \sigma\,X_t\,\rmd W_t\,, \quad X_0 > 0\,,
\end{equation}
as suggested by Robert May in \cite{Maybook}, and studied in \cite{Polansky}.
If an invariant distribution exists, its density $\rho$ is
the solution of the stationary Fokker-Planck equation
\begin{equation} \label{eq:FPlog}
0 = \frac{\rmd^2}{\rmd x^2}(\sigma^2x^2 \rho) - \frac{\rmd}{\rmd x}\left(
 r x\left(1-\frac xk\right)\rho\right).
\end{equation}
The density must take exactly the form of a power law with exponential cutoff --- a \emph{gamma distribution},
\begin{equation} \label{FPlogsol}
\rho(x) = f(x;\alpha,\beta)= \frac{\beta^\alpha x^{\alpha-1}e^{-\beta x}}{\Gamma(\alpha)},\quad
\alpha = \frac{r}{\sigma^2}-1,\quad \beta = \frac{r\sigma^2}k.
\end{equation}
When $r>\sigma^2$, $\rho$ is integrable on $(0, \infty)$ and $\frac{\beta^a}{\Gamma(\alpha)}$ is exactly the normalization constant.   
Note that we recover the exponential distribution~\eqref{expdistr} 
by choosing $\sigma =1$, $r=2$ and $k=2$. 
When $r\le\sigma^2$, no invariant distribution exists~---~instead one expects
the process to spread out indefinitely as in the case when $r=0$.

In equation~\eqref{Stochasticlogistic}, the degenerate diffusivity proportional to $X_t$ 
prevents the stochastic process from hitting $0$~---~this is a well-known
phenomenon orginating with work of Feller~\cite{Feller1952}. 
In particular, the criterion of Theorem~3 of \cite{Feller1954} states
that the solution $X_t$ of \eqref{Stochasticlogistic}
can hit $0$ if and only if for all $\lambda>0$, all solutions $z(x)$ 
of the ODE 
\begin{equation}
Az = \lambda z\,, \qquad 
A = \sigma^2 x^2 \frac{\rmd^2}{\rmd x^2} +rx \left(1 - \frac{x}{k}\right)\frac{\rmd}{\rmd x}\,,
\end{equation}
on $(0,\infty)$ are bounded in a neighborhood of $0$. 
To determine whether this is the case, one can change variables via 
$y=\log x$ and note that the 
theory of asymptotic behavior of ODEs \cite[Section 3.8]{CL1955}
 allows us to neglect the term $e^y(\rmd z/\rmd y)$ in the limit $y\to-\infty$.
Since the equation 
\[
\sigma^2 \frac{\rmd^2z}{\rmd y^2} + (r-\sigma)\frac{\rmd z}{\rmd y} - \lambda z  = 0
\]
has unbounded solutions, we conclude that
the solution process $X_t$ for \eqref{Stochasticlogistic}
naturally stays in $(0,\infty)$ and 
no further assumption needs to be made about what happens at 0.

We note that the degenerate nature of the diffusion 
does neither reflect Niwa's simulation results, which he used
to estimate diffusivity for the SDE model~\eqref{NiwaSDE1} of 
the merging-splitting dynamics \cite[Figure 2]{N}, nor the diffusion coefficient $c$~\eqref{diffusioncoeff} for the second order approximation of the jump process.
Recall that neither of these approaches delivered results that justifiably model the merging-splitting dynamics described by the jump process. 
In contrast, the logistic SDE model~\eqref{Stochasticlogistic} consistently describes 
group-size fluctuations that occur {\em due to a different mechanism},
namely a geometric Brownian motion with logistic drift.
Consequently, \eqref{Stochasticlogistic} is capable of providing
a rationale for the appearance of the gamma distribution if the group-size dynamics
are governed by a suitable mechanism.

In order to understand better what kind of mechanism could lead to \eqref{Stochasticlogistic},
we note that some basic features observed in
merging-splitting dynamics resemble principles expressed in~\eqref{Stochasticlogistic}: 
The linear multiplicative noise term can be interpreted to correspond with fluctuations increasing with the cluster size due to an increase in ``coagulation and fragmentation interactions". The logistic drift term expresses the dominance of fragmentation for larger sizes and dominance of coagulation for smaller sizes. In the following, we make these notions more precise by describing how 
a classical nearest-neighbor random walk on the lattice (corresponding to small jumps in group size among a discrete set)  formally
corresponds in the continuum limit to the stochastic logistic SDE model~\eqref{Stochasticlogistic}.

\subsubsection{A lattice random walk approximating the stochastic logistic model}
Consider a stochastic process determined exclusively by jumps to the
{\em nearest neighbours} on the lattice $h\mathbb{N}$ with 
small grid size $h$ on $\mathbb{R}_+$.
We let $u_j(t)$ denote the probability of an individual to be in a group of size $jh$ at time $t$, and suppose that the group size can change only by jumps from $jh$ to $jh\pm h$. 
We let $\alpha_j$ denote the
rate of jumps from $jh$ to $jh+h$ and let $\beta_j$ denote the rate of jumps
from $jh$ to $jh-h$. 

The master equation for the corresponding process on the lattice is 
\begin{equation} \label{eq:masterlattice}
\D_t u_j = \alpha_{j-1}u_{j-1}+\beta_{j+1}u_{j+i} - (\alpha_j+\beta_j)u_j,
\quad j=1,2,\ldots.
\end{equation}
We exclude the origin by setting $\alpha_0=0=\beta_1$ and will ignore the boundary henceforth.
We can rewrite equation~\eqref{eq:masterlattice} as
\begin{equation} \label{eq:fluxform1}
\D_t u_j = -F_{j+1/2}+ F_{j-1/2}\,,
\end{equation}
where 
$$F_{j+1/2} := \alpha_j u_j - \beta_{j+1}u_{j+1}$$ 
is the flux from $j$ to $j+1$, and, hence, 
$$F_{j-1/2} = \alpha_{j-1} u_{j-1} - \beta_{j}u_{j}$$ 
is the flux from $j-1$ to $j$.

Let us consider how to choose $\alpha_j$ and $\beta_j$ to approximate
a given SDE
\begin{equation} \label{eq:SDEgen}
\rmd X_t = a(X_t)\,\rmd t + \sqrt2 b(X_t)\,\rmd W_t\,, \quad X_0 > 0\,,
\end{equation}
on $(0,\infty)$. Recall that in the case of~\eqref{Stochasticlogistic} we have
\begin{equation}\label{d:ab_ex}
a(x) = rx - \frac rk x^2,\qquad b(x) = \sigma x.
\end{equation}
The Fokker-Planck equation associated with the SDE~\eqref{eq:SDEgen} is
\begin{equation} \label{eq:FPgen}
\D_t u = -\D_x(a(x)u)+\D_x^2(b(x)u) = \D_x( (2bb'-a)u + b^2 \D_xu).
\end{equation}
Deploying a key idea from numerical analysis,
we write the modified drift as the difference of positive quantities
\[
a-2bb'  = f^+ - f^-\,.
\]
For our example \eqref{d:ab_ex} we can take
\[
f^+(x) = rx, \qquad f^-(x) = \frac rk x^2 + 2\sigma^2 x\,.
\]
Now we discretize the Fokker-Planck equation~\eqref{eq:FPgen}, using upwinding for the drift:
\begin{align*}
\D_t u_j = 
&\ -\frac1h(f^+_j u_j - f^+_{j-1}u_{j-1})
+\frac1h(f^-_{j+1} u_{j+1} - f^-_{j}u_{j}) \\
&\ +\ \frac1{h^2}( b_{j+1/2}^2(u_{j+1}-u_j) - b_{j-1/2}^2(u_j-u_{j-1}) )\,.
\end{align*}
This equation takes the conservative form in \eqref{eq:fluxform1} with 
jumping rates 
\[
\alpha_j = \frac1h f^+_j + \frac1{h^2} b_{j+1/2}^2,
\qquad
\beta_j = \frac1h f^-_j + \frac1{h^2} b_{j-1/2}^2.
\]
In the case of the stochastic logistic equation~\eqref{Stochasticlogistic}, with drift and diffusion coefficients given by~\eqref{d:ab_ex}, we take $x_j = jh$ and $b_{j+1/2}=b(x_j)$, $f_j^+= f^+(x_j)$, $f_j^-= f^-(x_j)$. Hence, we have
\begin{equation} \label{discreterates}
\alpha_j = rj + \sigma^2 j^2, 
\qquad
\beta_j = \frac {rh}k j^2 + 2 \sigma^2 j + \sigma^2(j-1)^2 = 
 \frac {rh}k j^2 +  \sigma^2(j^2+1)\,.
\end{equation}
The jump rates $\alpha_j$ and $\beta_j$ both consist of a term with factor $r$, corresponding with the drift in~\eqref{Stochasticlogistic}, and a term with factor $\sigma^2$, corresponding with the diffusion in~\eqref{Stochasticlogistic}. The terms with factor $\sigma^2$ are all quadratic in the discrete size $j$, as one would expect. The $r$-term in $a_j$ is linear in $j$ and has a stronger relative impact on jumps to the right the smaller $j$ is, as in~\eqref{Stochasticlogistic}. In the rate $\beta_j$ the $r$-term is quadratic in $j$, as in~\eqref{Stochasticlogistic}, implying that both terms contribute to the increasing rate of jumps to the left in the same way. This gives a particular lattice random walk which approximates model~\eqref{Stochasticlogistic} for small $h$.

Note that a probability density $u^{\textnormal{eq}}$ satisfying for all $j \in \mathbb{N}$ the ratio
\begin{equation} \label{equilratio}
\frac{\alpha_j}{\beta_{j+1}} =\frac{u_{j+1}^{\textnormal{eq}}}{u_j^{\textnormal{eq}}}
\end{equation}
is an equilibrium for~\eqref{eq:fluxform1}.
We check this condition for $u_j^{*} := \rho(jh)$ where $\rho$ is the gamma distribution from~\eqref{FPlogsol}. 
First, we observe with a first order Taylor expansion at $h=0$ that
$$ \frac{u_{j+1}^*}{u_j^*} = e^{-\beta h}\left(\frac{j+1}{j}\right)^{\alpha-1} = \left(\frac{j+1}{j}\right)^{\alpha-1} \left( 1- \beta h\right) + \mathcal{O}\left(h^2\right) \,.$$
On the other hand, we expand $\frac{\alpha_j}{\beta_{j+1}}$ at $h=0$ to obtain
$$ \frac{\alpha_j}{\beta_{j+1}} = \frac{j^2 + (\alpha +1)j}{\left(\frac{\beta}{\sigma^4}h +1\right)(j+1)^2 +1} =  \frac{j^2 + (\alpha +1)j}{(j+1)^2 +1} - \frac{\beta h}{\sigma^4} \frac{j^2 + (\alpha +1)j}{(j+1)^2 +1}\frac{(j+1)^2}{(j+1)^2 +1} + \mathcal{O}\left(h^2\right)\,.$$
Recall that choosing $\sigma=1, r=2$ and $k=2$ in~\eqref{Stochasticlogistic} gives the exponential population distribution~\eqref{expdistr}.  In this case we obtain $\alpha =1$ and
\begin{equation} \label{rateratio}
 \frac{u_{j+1}^*}{u_j^*} =  1- \beta h + \mathcal{O}\left(h^2\right), \quad \frac{\alpha_j}{\beta_{j+1}} =  \frac{j^2 + 2j}{j^2+2j+2} - \beta h \left(\frac{j^2 + 2j}{j^2+2j+2} \, \frac{(j+1)^2}{(j+1)^2 +1}\right) + \mathcal{O}\left(h^2\right) \,.
\end{equation}
Observe that for any given $x = jh \in (0, \infty)$ the discrete size $j = \frac{x}{h}$ grows proportionally as $h$ is taken smaller. Therefore, the equilibrium ratio relation~\eqref{equilratio} is satisfied at $x$ in the continuum limit for \eqref{rateratio}, i.e.~when $h \to 0$ and $j = \frac{x}{h} \to \infty$.

This formal derivation indicates that, in terms of the equilibrium density $\rho$, the SDE model~\eqref{Stochasticlogistic} with suitable parameters approximates the nearest neighbour model with jump rates~\eqref{discreterates}. Hence, modelling the coagulation-fragmentation dynamics by the stochastic logistic equation~\eqref{Stochasticlogistic} appears coherent with an underlying locally restricted jump process. This scenario avoids the main problem of the SDE modelling discussed in Sections~\ref{sec:NiwaSDE} and~\ref{SDEjump} where the global aspect of the jump dynamics associated with~\eqref{strongform} cannot be captured by the continuous solution of a stochastic differential equation.

\section{Conclusion}
For coagulation-fragmentation models of the form~\eqref{weakgeneral}, we have derived the evolution equation~\eqref{eq:jumpproc_stronggen} for the population distribution and a formalization of the underlying jump process. The associated algorithm has been validated by showing its accordance with the equilibrium for~\eqref{eq:CF1_Niwa} and its versatility has been demonstrated by also working with different coagulation and fragmentation rates and a numerical study of the respective statistical properties.

Compared to the numerical methods for simulating Niwa-like coagulation and fragmentation models developed and summarised in~\cite{DE}, the jump process algorithm has been shown to be the most versatile and dynamically insightful scheme, by tracking the behaviour of individual trajectories. 
We have seen that, in particular, the rates can be chosen to be random or polynomial. This opens new, potentially more realistic, modelling possibilities that can be further investigated in future work.

Although Niwa's SDE is neither rigorously justified nor particularly well-suited for numerical investigations, it has proven to be an insightful approach to the problem at hand. In Section~\ref{SDEjump} we have mathematically derived an alternative drift-diffusion approximation to 
the jump process whose equilibrium distribution shows similar behavior as the equilibrium for the jump process but does not coincide. To overcome the inherent discrepancy between continuous solutions of SDEs and processes with large jumps, we have indicated an additional possibility using an SDE with degenerate noise (stochastic logistic model) whose equilibria exactly take the form of gamma distributions and which can be related to a nearest-neighbour jump model. A more thorough investigation of that matter is left for future work.

Another future line of investigation could lead to spatialized models where coagulation and fragmentation rates depend on the location of the groups in space. One could imagine several types of spatial inhomogeneties, for example caused by attracting regions with high coagulation activity or volatile regions with high fragmentation probabilities. Such models would have a more direct correspondence with population dynamics and would give rise to new challenges that could be tackled by a jump process approach as discussed in this paper.

\section*{Acknowledgments} 
PD acknowledges support by the Engineering and Physical Sciences  Research Council (EPSRC) under grant no. EP/M006883/1 and EP/P013651/1, by the Royal Society and the Wolfson Foundation through a Royal Society Wolfson Research Merit Award no. WM130048 and by the National Science  Foundation (NSF) under grant no. RNMS11-07444 (KI-Net). PD is on leave  from CNRS, Institut de Math\'ematiques de Toulouse, France. M.E. gratefully acknowledges support from the Department of Mathematics, Imperial College London through a Roth Scholarship and by the German Research Foundation (DFG) via grant SFB/TR109 Discretization in Geometry and Dynamics. This material is based upon work supported by the National
Science Foundation under grants DMS 1812573 (JGL) and DMS 1812609 (RLP) and by the Center for Nonlinear Analysis (CNA) under National Science Foundation PIRE Grant no.\ OISE-0967140,
and by the NSF Research Network Grant no.\ RNMS11-07444 (KI-Net). 

\section*{Data statement} 
No new data were collected in the course of this research.


\begin{thebibliography}{10}

\bibitem{AizenmanBak}
M.~Aizenman and T.~A. Bak.
\newblock Convergence to equilibrium in a system of reacting polymers.
\newblock {\em Comm. Math. Phys.}, 65(3):203--230, 1979.

\bibitem{Aldous99}
D.~J. Aldous.
\newblock Deterministic and stochastic models for coalescence (aggregation and
  coagulation): a review of the mean-field theory for probabilists.
\newblock {\em Bernoulli}, 5(1):3--48, 1999.

\bibitem{BD}
E.~Bonabeau and L.~Dagorn.
\newblock Possible universality in the size distribution of fish schools.
\newblock {\em Phys. Rev. E}, 51:5220--5223, 1995.

\bibitem{BDF}
E.~Bonabeau, L.~Dagorn, and P.~Freon.
\newblock Scaling in animal group-size distributions.
\newblock {\em Proc. Natl. Acad. Sci. USA}, 96:4472--4477, 1999.

\bibitem{CL1955}
E.~A. Coddington and N.~Levinson.
\newblock {\em Theory of ordinary differential equations}.
\newblock McGraw-Hill Book Company, Inc., New York-Toronto-London, 1955.

\bibitem{DE}
P.~Degond and M.~Engel.
\newblock Numerical approximation of a coagulation-fragmentation model for
  animal group size statistics.
\newblock {\em Netw. Heterog. Media}, 12(2):217--243, 2017.

\bibitem{DLP}
P.~Degond, J.-G. Liu, and R.~L. Pego.
\newblock Coagulation-fragmentation model for animal group-size statistics.
\newblock {\em J. Nonlinear Sci.}, 27(2):379--424, 2017.

\bibitem{EW2000}
A.~Eibeck and W.~Wagner.
\newblock An efficient stochastic algorithm for studying coagulation dynamics
  and gelation phenomena.
\newblock {\em SIAM J. Sci. Comput.}, 22(3):802--821, 2000.

\bibitem{EW2001}
A.~Eibeck and W.~Wagner.
\newblock Stochastic particle approximations for {S}moluchoski's coagulation
  equation.
\newblock {\em Ann. Appl. Probab.}, 11(4):1137--1165, 2001.

\bibitem{EW2003}
A.~Eibeck and W.~Wagner.
\newblock Stochastic interacting particle systems and nonlinear kinetic
  equations.
\newblock {\em Ann. Appl. Probab.}, 13(3):845--889, 2003.

\bibitem{EK}
S.~N. Ethier and T.~G. Kurtz.
\newblock {\em Markov processes}.
\newblock Wiley Series in Probability and Mathematical Statistics: Probability
  and Mathematical Statistics. John Wiley \& Sons, Inc., New York, 1986.
\newblock Characterization and convergence.

\bibitem{FMS2014}
E.~A. Feinberg, M.~Mandava, and A.~N. Shiryaev.
\newblock On solutions of {K}olmogorov's equations for nonhomogeneous jump
  {M}arkov processes.
\newblock {\em J. Math. Anal. Appl.}, 411(1):261--270, 2014.

\bibitem{Feller1940}
W.~Feller.
\newblock On the integro-differential equations of purely discontinuous
  {M}arkoff processes.
\newblock {\em Trans. Amer. Math. Soc.}, 48:488--515, 1940.

\bibitem{Feller1952}
W.~Feller.
\newblock The parabolic differential equations and the associated semi-groups
  of transformations.
\newblock {\em Ann. of Math. (2)}, 55:468--519, 1952.

\bibitem{Feller1954}
W.~Feller.
\newblock Diffusion processes in one dimension.
\newblock {\em Trans. Amer. Math. Soc.}, 77:1--31, 1954.

\bibitem{Feller}
W.~Feller.
\newblock {\em An introduction to probability theory and its applications.
  {V}ol. {II}}.
\newblock Second edition. John Wiley \& Sons, Inc., New York-London-Sydney,
  1971.

\bibitem{G}
S.~Gueron.
\newblock The steady-state distributions of coagulation-fragmentation
  processes.
\newblock {\em J. Math. Biol.}, 37(1):1--27, 1998.

\bibitem{GL}
S.~Gueron and S.~Levin.
\newblock The dynamics of group formations.
\newblock {\em Math. Biosc.}, 128:243--264, 1995.

\bibitem{Jeon}
I.~Jeon.
\newblock Existence of gelling solutions for coagulation-fragmentation
  equations.
\newblock {\em Comm. Math. Phys.}, 194(3):541--567, 1998.

\bibitem{Kallenberg}
O.~Kallenberg.
\newblock {\em Foundations of modern probability}.
\newblock Probability and its Applications (New York). Springer-Verlag, New
  York, second edition, 2002.

\bibitem{Lushnikov78}
A.~Lushnikov.
\newblock Some new aspects of coagulation theory.
\newblock {\em Izv. Akad. Nauk SSSR Ser. Fiz. Atmosfer. i Okeana}, 14:738--743,
  1978.

\bibitem{MJS}
Q.~Ma, A.~Johansson, and D.~Sumpter.
\newblock A first principles derivation of animal group size distributions.
\newblock {\em Journal of Theoretical Biology}, 283:35--43, 2011.

\bibitem{Marcus68}
A.~H. Marcus.
\newblock Stochastic coalescence.
\newblock {\em Technometrics}, 10:133--143, 1968.

\bibitem{Maybook}
R.~M. May.
\newblock {\em Stability and complexity in model ecosystems}, volume~6.
\newblock Princeton university press, 2001.

\bibitem{McKean}
H.~P. McKean, Jr.
\newblock A. {S}korohod's stochastic integral equation for a reflecting barrier
  diffusion.
\newblock {\em J. Math. Kyoto Univ.}, 3:85--88, 1963.

\bibitem{McKean66}
H.~P. McKean, Jr.
\newblock A class of {M}arkov processes associated with nonlinear parabolic
  equations.
\newblock {\em Proc. Nat. Acad. Sci. U.S.A.}, 56:1907--1911, 1966.

\bibitem{N1}
H.~Niwa.
\newblock Mathematical model for the size distribution of fish schools.
\newblock {\em Comp. Math. Appl.}, 32:79--88, 1996.

\bibitem{N2}
H.~Niwa.
\newblock School size statistics of fish.
\newblock {\em Journal of Theoretical Biology}, 195:351--361, 1998.

\bibitem{N}
H.~Niwa.
\newblock Power-law versus exponential distributions of animal group sizes.
\newblock {\em Journal of Theoretical Biology}, 224:451--457, 2003.

\bibitem{N3}
H.~Niwa.
\newblock Space-irrelevant scaling law for fish school sizes.
\newblock {\em Journal of Theoretical Biology}, 228:347--357, 2004.

\bibitem{Norris}
J.~R. Norris.
\newblock Smoluchowski's coagulation equation: uniqueness, nonuniqueness and a
  hydrodynamic limit for the stochastic coalescent.
\newblock {\em Ann. Appl. Probab.}, 9(1):78--109, 1999.

\bibitem{O}
A.~Okubo.
\newblock Dynamical aspects of animal grouping: swarms, schools, rocks, and
  herds.
\newblock {\em Adv. Biophys.}, 22:1--–94, 1986.

\bibitem{Polansky}
P.~Polansky.
\newblock Invariant distributions for multipopulation models in random
  environments.
\newblock {\em Theoret. Population Biol.}, 16(1):25--34, 1979.

\bibitem{Tanaka1970b}
H.~Tanaka.
\newblock Propagation of chaos for certain purely discontinuous {M}arkov
  processes with interactions.
\newblock {\em J. Fac. Sci. Univ. Tokyo Sect. I}, 17:259--272, 1970.

\bibitem{Tanaka1970a}
H.~Tanaka.
\newblock Purely discontinuous {M}arkov processes with nonlinear generators and
  their propagation of chaos.
\newblock {\em Teor. Verojatnost. i Primenen}, 15:599--621, 1970.

\bibitem{Tanaka}
H.~Tanaka.
\newblock Stochastic differential equations with reflecting boundary condition
  in convex regions.
\newblock {\em Hiroshima Math. J.}, 9(1):163--177, 1979.

\bibitem{Ueno}
T.~Ueno.
\newblock A class of {M}arkov processes with non-linear, bounded generators.
\newblock {\em Japan. J. Math.}, 38:19--38, 1969.

\end{thebibliography}
\end{document}